\newtheorem{lemma}{Lemma}
\newtheorem{prop}{Proposition}
\newtheorem{assumption}{Assumption}
\def\Var{{\rm Var}\,}
\def\E{{\rm E}\,}
\def\N{{\rm N}\,}
\newcommand{\indicator}[1]{\mathds{1}\left\{ #1 \right\}}
\newcommand{\plim}{\text{plim} \hspace{0.05in}}
\newcommand\independent{\protect\mathpalette{\protect\independenT}{\perp}}
\def\independenT#1#2{\mathrel{\rlap{$#1#2$}\mkern2mu{#1#2}}}
\begin{document}

\sloppy

    \renewcommand{\topfraction}{0.9}	
    \renewcommand{\bottomfraction}{0.8}	
    \setcounter{topnumber}{2}
    \setcounter{bottomnumber}{2}
    \setcounter{totalnumber}{4}     
    \setcounter{dbltopnumber}{2}    
    \renewcommand{\dbltopfraction}{0.9}	
    \renewcommand{\textfraction}{0.07}	
    \renewcommand{\floatpagefraction}{0.7}	
    \renewcommand{\dblfloatpagefraction}{0.7}	

\def\ci{\perp\!\!\!\perp}

\title{Combining List Experiment and Direct Question Estimates of Sensitive Behavior Prevalence} 

\author{Peter M. Aronow, Alexander Coppock, Forrest W. Crawford, \\ and Donald P. Green\footnote{Peter M. Aronow is Assistant Professor, Department of Political Science, Yale University, 77 Prospect Street, New Haven, CT 06520. 
Alexander Coppock is Doctoral Student, Department of Political Science, Columbia University, 420 W. 118th Street, New York, NY 10027. 
Forrest W. Crawford is Assistant Professor, Department of Biostatistics, Yale School of Public Health, 60 College Street, New Haven, CT 06520.
Donald P. Green is Professor of Political Science at Columbia University, 420 W. 118th Street, New York, NY 10027. 
The authors are grateful to Columbia University, which funded components of this research but bears no responsibility for the content of this report. This research was reviewed and approved by the Institutional Review Board of Columbia University (IRB-AAAL2659). Helpful comments from Xiaoxuan Cai, Albert Fang, Cyrus Samii, Chris Schuck, and Michael Schwam-Baird are greatly appreciated.}} 
\maketitle \thispagestyle{empty}

\RaggedRight
\setlength{\parindent}{15pt}
\bibliographystyle{apsr}

\begin{abstract}

\normalsize{Survey respondents may give untruthful answers to sensitive questions when asked directly. In recent years, researchers have turned to the list experiment (also known as the item count technique) to overcome this difficulty. 
While list experiments may be less prone to bias than direct questioning, list experiments are also more susceptible to sampling variability. We show that researchers do not have to abandon direct questioning altogether in order to gain the advantages of list experimentation. We develop a nonparametric estimator of the prevalence of sensitive behaviors that combines list experimentation and direct questioning. We prove that this estimator is asymptotically more efficient than the standard difference-in-means estimator, and we provide a basis for inference using Wald-type confidence intervals. Additionally, leveraging information from the direct questioning, we derive two nonparametric placebo tests of the identifying assumptions for the list experiment. We demonstrate the effectiveness of our combined estimator and placebo tests with an original survey experiment.

} \end{abstract}
\thispagestyle{empty}
\clearpage

\setcounter{page}{1}


\section{Introduction}

The prevalence of sensitive attitudes and behaviors is difficult to estimate using standard survey techniques due to the tendency of respondents to withhold information in such settings. In recent years, the list experiment has grown in popularity as a method for eliciting truthful responses to sensitive questions. Introduced as the ``item count technique'' by Miller (1984), the procedure has been used to study racial prejudice (Kuklinski, Cobb, and Gilens 1997a; Sniderman and Carmines 1997; Redlawsk, Tolbert, and Franko 2010), drug use (Biemer et al. 2005; Coutts and Jann 2011), risky sexual activity (LaBrie and Earleywine 2000; Walsh and Braithwaite 2008), vote buying (Gonzalez-Ocantos et al. 2012), and support for NATO-led forces among Pashtun respondents (Blair, Imai, and Lyall 2013). The standard list experiment proceeds by randomly partitioning respondents into control and treatment groups.  Subjects in the control group receive a list of $J$ non-sensitive items and report how many of the items apply to them. Subjects in the treatment group receive a list of $J+1$ items comprised of the same $J$ non-sensitive items plus one sensitive item. The list experiment estimate of the prevalence of the sensitive behavior is the difference-in-means between the treatment and control groups. The list experiment gives respondents cover to admit to engaging in the sensitive behavior -- so long as the respondent reports between $1$ and $J$ items, the researcher ostensibly cannot know whether an individual respondent engages in the sensitive behavior, but aggregate prevalence can be estimated. 

List experiments may be useful because prevalence estimates based on direct questions are biased when some subjects tell the truth and others withhold. In particular, the researcher cannot distinguish a respondent who does not engage in the sensitive behavior from one who does but is withholding: both types answer ``No'' to the direct question. Nevertheless, direct questions provide an important source of information when subjects \textit{admit} to engaging in a sensitive behavior. 
Direct questions are biased but yield precise estimates of prevalence. Under some assumptions, list experiments provide unbiased estimates of prevalence, but these estimates can be quite variable. The method we detail below allows researchers to reap the benefits of both direct questions and list experiments: increased precision and decreased bias. The central intuition of our approach is that, given a Monotonicity assumption (no false confessions), the true prevalence is a weighted average of two subject types: those who admit to the sensitive behavior and those who withhold; we estimate the former with direct questions and the latter with list experiments.

A popular design for the list experiment is to randomly split the sample into three groups: those receiving the control list and no direct question, those receiving the treatment list and no direct question, and those receiving a direct question but no list at all (Brueckner, Morning, and Nelson 2005; Holbrook and Krosnick 2010; Heerwig and McCabe 2009). This design is often used so that direct and list experiment estimates can be compared within the same population. A variant of this design asks only subjects in the control group the direct question (Ahart and Sackett 2004; Gilens, Sniderman, and Kuklinski 1998). Our estimator requires that both treatment and control subjects receive a direct question. Examples of this more extensive measurement approach include Droitcour et al. (1991) and Gonzales-Ocantos et al. (2012). Echoing similar design advice given in Kramon and Weghorst (2012) and Blair and Imai (2012), we advocate asking the direct question whenever feasible and investigating any possible ordering effects.

When respondents are asked both direct and list questions, researchers can also test the core assumptions underlying the list experiment: No Liars, No Design Effects, and ignorable treatment assignment (as found in Imai 2011). The No Liars assumption requires that those who engage in the sensitive behavior do in fact include the sensitive item when reporting the number of list items that apply. The No Design Effects assumption requires that subjects' responses to the non-sensitive items on the list are unaffected by the presence or absence of the additional sensitive item. These tests complement the one proposed by Blair and Imai (2012), which assesses whether any identified proportions of respondent types are negative, which would imply a contradiction between the model and the observed data.

We propose two tests. The logic of the first test, which is formalized below, is as follows: under the core list experiment assumptions and a Monotonicity assumption, the treatment versus control difference-in-means is in expectation equal to 1 among those who answer ``Yes'' to the direct question. Failing to reject the null hypothesis that the true difference in means for this subset is equal to 1 is equivalent to failing to reject the null hypothesis that the assumptions hold. We also propose a test of a variant of the ignorable treatment assignment assumption by assessing the dependence between responses to the direct question and the experimental treatment. While not conclusively demonstrating that the assumptions hold, these test results may give researchers more confidence that their survey instruments are providing reliable prevalence estimates. 

Previous methodological work on list experiments has largely been focused on two goals: decreasing the variance of list experiment estimates and modeling prevalence in a multivariate setting. Droitcour et al. (1991) propose the `Double List Experiment'' design in which the prevalence of the same sensitive item is investigated by two list experiments conducted with the same subjects, thereby reducing sampling variability. Holbrook and Krosnick (2010) use multivariate regression with treatment-by-covariate interaction terms to explore prevalence heterogeneity. Glynn (2013) suggests constructing the non-sensitive
items so that they are negatively
correlated with one another, a 
design feature that simultaneously reduces baseline variability and avoids ceiling effects. Corstange (2009) modifies the standard list experiment design by asking the control group each of the non-sensitive items directly, so that response to the non-sensitive items can be modeled and more precise estimates of the sensitive items can be calculated. Imai (2011) proposes a nonlinear least squares estimator and a maximum likelihood estimator to model responses with covariate data. Blair and Imai (2012) offer a detailed review of these techniques.   

Our contribution to the list experiment literature is to show the ease with which the additional information yielded by direct questioning can be incorporated into existing techniques. We demonstrate our proposed estimator and placebo tests on data from an original survey experiment conducted on Amazon's Mechanical Turk platform.  We conclude with suggestions for the design and analysis of list experiments in scenarios where it is ethically feasible to ask direct questions as well.

\section{Setting and Identification}

Suppose we have a random sample of $n$ subjects independently drawn from a large population. Let $X_i=1$ if subject $i$ engages in a sensitive behavior and $X_i=0$ otherwise.  We attempt to measure the behavior $X_i$ using two methods: direct questioning and list experimentation. Our goal is to identify the prevalence of the sensitive behavior in the population, $\mu = \Pr[X_i = 1]$.  Let $Y_i$ be the report of subject $i$ to the direct question.  We assume that, under direct questioning, subjects may lie and claim that they do not engage in the behavior but will not lie and falsely claim that they do engage in the behavior. 

\begin{assumption}[Monotonicity] There exist three latent classes of respondents under direct questioning: those who do not engage in the behavior and report truthfully ($X_i=0$, $Y_i=0$), subjects who engage and report truthfully ($X_i=1$, $Y_i=1$), and subjects who engage but report that they do not, i.e., withhold ($X_i=1$, $Y_i=0$).
\end{assumption} 

Let $p=\Pr[Y_i=1|X_i=1]$ be the probability of a subject reporting truthfully to the direct question, given that he or she engages in the sensitive behavior.  Then the response of subject $i$ is
\[ Y_i = \begin{cases} 0 & \text{with probability } 1-\mu + \mu(1-p) \\
                       1 & \text{with probability } \mu p. \end{cases} \]
The response $Y_i=0$ can be seen as a mixture of truthful negative reports and withholding.  
The probability that subject $i$ engages in the behavior, given a negative response, is therefore
\[ \Pr[X_i=1|Y_i=0] = \frac{(1-p)\mu}{1-\mu p}. \]
While direct questioning is sufficient to reveal $\Pr[Y_i = 1]=\mu p$, it is not sufficient to identify $\Pr[Y_i = 0|X_i=1]=1-p$.  

In contrast, the list experiment provides sufficient information to identify $\mu$. Suppose we have a treatment $Z_i \in \{0,1\}$.  In the list experiment, treated subjects ($Z_i=1$) receive a number of control questions and an additional question about the sensitive behavior.  We denote the number of items that the subject states are applicable with $V_i$.

\begin{assumption}[No Liars and No Design Effects]
Reframing Imai (2011)'s formulations, we observe $V_i = W_i + X_i Z_i$, where $W_i$ is the baseline outcome (under control) for subject $i$ for the list experiment. 
\end{assumption}

\noindent We further require that the treatment assignment be independent of the actual behavior, direct question, and baseline response. This is a stricter variant of Imai (2011)Õs ignorability assumption.

\begin{assumption}[Treatment Independence]
$(W_i, X_i, Y_i) \independent Z_i$.
\end{assumption}

\noindent Assumption 3 would be violated if (i) we did not have random assignment of the treatment or (ii) there are additional design effects; e.g., the treatment assignment affects the response to the direct question. Given random assignment, only the latter is a concern.

To ensure that all target quantities are well-defined (and, later, to facilitate inference), we impose the mild regularity condition that all population variances are positive.

\begin{assumption}[Non-degenerate Distributions]
$\Var[V_i | Z_i = z,Y_i = y] > 0$, for $z,y \in \{0,1\}$, $\Var[Z_i] > 0$ and $\Var[Y_i] > 0$.
\end{assumption}


We now turn to our primary identification result.

\begin{lemma}\label{id}
Given Assumptions 1-4 (Monotonicity, No Liars, No Design Effects, Treatment Independence, and Non-degenerate Distributions), the prevalence may be represented as
\begin{equation}\label{identification}
\mu = \E[Y_i] + \E[1 - Y_i] \left(\E[V_i | Z_i = 1, Y_i = 0]  - \E[V_i | Z_i = 0, Y_i = 0]\right).
\end{equation}
\end{lemma}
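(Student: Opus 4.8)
The plan is to decompose $\mu$ by conditioning on the direct-question response $Y_i$, use Monotonicity to pin down one of the two conditional prevalences exactly, and recover the other from the list-experiment data via Treatment Independence.

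First I would apply the law of total probability. Since $\Var[Y_i] > 0$ (Assumption 4) guarantees $0 < \Pr[Y_i = 1] < 1$, we may write
\[
\mu = \Pr[X_i = 1] = \Pr[X_i = 1 \mid Y_i = 1]\,\Pr[Y_i = 1] + \Pr[X_i = 1 \mid Y_i = 0]\,\Pr[Y_i = 0].
\]
By Assumption 1 (Monotonicity), the event $Y_i = 1$ arises only in the class $(X_i = 1, Y_i = 1)$, so $\Pr[X_i = 1 \mid Y_i = 1] = 1$. Writing $\Pr[Y_i = 1] = \E[Y_i]$ and $\Pr[Y_i = 0] = \E[1 - Y_i]$, this yields $\mu = \E[Y_i] + \E[1 - Y_i]\,\Pr[X_i = 1 \mid Y_i = 0]$, which already has the shape of \eqref{identification}. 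It then remains to show that the difference of conditional means of $V_i$ identifies $\Pr[X_i = 1 \mid Y_i = 0] = \E[X_i \mid Y_i = 0]$.

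Second, I would substitute $V_i = W_i + X_i Z_i$ (Assumption 2) into each conditional expectation. When $Z_i = 0$ the term $X_i Z_i$ vanishes, so $\E[V_i \mid Z_i = 0, Y_i = 0] = \E[W_i \mid Z_i = 0, Y_i = 0]$; when $Z_i = 1$ it equals $\E[W_i \mid Z_i = 1, Y_i = 0] + \E[X_i \mid Z_i = 1, Y_i = 0]$ (Assumption 4 ensures all four conditioning events have positive probability, so these are well-defined). The key step is to strip off the conditioning on $Z_i$: because $(W_i, X_i, Y_i) \independent Z_i$ (Assumption 3), for any event $A$ concerning $(W_i, X_i)$ we have
\[
\Pr[A \mid Y_i = 0, Z_i = z] = \frac{\Pr[A, Y_i = 0 \mid Z_i = z]}{\Pr[Y_i = 0 \mid Z_i = z]} = \frac{\Pr[A, Y_i = 0]}{\Pr[Y_i = 0]} = \Pr[A \mid Y_i = 0],
\]
so $\E[W_i \mid Z_i = z, Y_i = 0] = \E[W_i \mid Y_i = 0]$ and $\E[X_i \mid Z_i = z, Y_i = 0] = \E[X_i \mid Y_i = 0]$ for $z \in \{0,1\}$. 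Subtracting the two conditional means, the $W_i$ terms cancel and we obtain $\E[V_i \mid Z_i = 1, Y_i = 0] - \E[V_i \mid Z_i = 0, Y_i = 0] = \E[X_i \mid Y_i = 0]$. Combining this with the decomposition of $\mu$ above gives \eqref{identification}.

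The argument is mostly bookkeeping; the one place that requires care is the joint-independence step. It is not enough that $W_i \independent Z_i$ and $X_i \independent Z_i$ hold marginally — one needs the full vector $(W_i, X_i, Y_i)$ independent of $Z_i$ so that conditioning jointly on $(Y_i, Z_i)$ reduces to conditioning on $Y_i$ alone. This is precisely why Assumption 3 is stated for the joint vector and is strictly stronger than Imai's ignorability condition, and making that dependence explicit is the substantive content of the proof.
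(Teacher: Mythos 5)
Your proof is correct, and it takes a recognizably different route from the paper's. The paper starts from the standard list-experiment identity $\mu = \E[V_i|Z_i=1]-\E[V_i|Z_i=0]$ (its Lemma 2, justified by Assumptions 2--3) and then marginalizes that difference over $Y_i$, so the key remaining step is that the difference within the $Y_i=1$ stratum equals $\E[X_i|Y_i=1]=1$. You instead never invoke the aggregate identity: you decompose $\mu=\Pr[X_i=1]$ by total probability over $Y_i$, use Monotonicity to set $\Pr[X_i=1|Y_i=1]=1$, and then do the stratum-level conversion in the $Y_i=0$ stratum, showing $\E[V_i|Z_i=1,Y_i=0]-\E[V_i|Z_i=0,Y_i=0]=\E[X_i|Y_i=0]$ via $V_i=W_i+X_iZ_i$ and the joint independence $(W_i,X_i,Y_i)\independent Z_i$. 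The two arguments use symmetric instances of the same stratum calculation, so the mathematical content is equivalent, but the emphases differ: your version tracks the paper's stated intuition (prevalence as a weighted average of admitters and withholders), makes explicit where joint rather than marginal independence is needed and where Assumption 4 supplies the positivity that makes the conditional expectations well-defined, and correctly attributes $\Pr[X_i=1|Y_i=1]=1$ to Monotonicity (the paper's proof cites Assumption 2 for this step, which appears to be a slip); the paper's version, by contrast, exhibits the combined estimand as a refinement of the standard difference-in-means estimand, which sets up the comparison with Lemma 2 and the later efficiency result.
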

\begin{proof}
By Assumptions 2 and 3, 
\begin{equation}
    \E[V_i|Z_i=1] - \E[V_i|Z_i=0] = W_i + \mu - W_i = \mu 
  \label{eq:diff}
\end{equation}
Then expanding the left-hand side of \eqref{eq:diff} by marginalizing over $Y_i$, we represent the prevalence of the sensitive behavior as
\begin{equation*}
  \begin{split}
    \mu &= \E[V_i|Y_i=0,Z_i=1] \E[1-Y_i] + \E[V_i|Y_i=1,Z_i=1] \E[Y_i] \\
        &\quad - (\E[V_i|Y_i=0,Z_i=0] \E[1-Y_i] + \E[V_i|Y_i=1,Z_i=0] \E[Y_i]) \\
        &= (\E[V_i|Y_i=1,Z_i=1] - \E[V_i|Y_i=1,Z_i=0] ) \E[Y_i]  \\
        &\quad + (\E[V_i|Y_i=0,Z_i=1] - \E[V_i|Y_i=0,Z_i=0] ) \E[1-Y_i] \\ 
                &= \E[X_i|Y_i=1] \E[Y_i]  + (\E[V_i|Y_i=0,Z_i=1] - \E[V_i|Y_i=0,Z_i=0] ) \E[1-Y_i].
  \end{split}
\end{equation*}
The result follows since 
$ \E[X_i|Y_i=1] = 1 $
by Assumption 2.
\end{proof}
\noindent Note that if Assumptions 2-4 hold, but Assumption 1 (Monotonicity) does not hold, then  $\E[Y_i] + \E[1 - Y_i] \left(\E[V_i | Z_i = 1, Y_i = 0]  - \E[V_i | Z_i = 0, Y_i = 0]\right) > \mu$, as then $\E[X_i|Y_i = 1] < 1$. We now compare our identification result to standard results for the list experiment.
\begin{lemma}\label{id2}
Given Assumptions 2-4, 
\begin{equation}\label{identification2}
\mu = \E[V_i | Z_i = 1]  - \E[V_i | Z_i = 0].
\end{equation}
\end{lemma}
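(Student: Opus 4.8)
The plan is to unwind the structural equation in Assumption 2 and then strip off the conditioning on $Z_i$ using the independence in Assumption 3; this is essentially the content of line \eqref{eq:diff} in the proof of Lemma \ref{id}, so the argument is short. First I would observe that Assumption 4 guarantees $\Var[Z_i] > 0$, hence $\Pr[Z_i = 1]$ and $\Pr[Z_i = 0]$ are both strictly positive and the conditional expectations $\E[V_i | Z_i = z]$ are well defined.

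Next I would substitute $V_i = W_i + X_i Z_i$ from Assumption 2 and evaluate the two conditional means separately. On the event $\{Z_i = 1\}$ we get $\E[V_i | Z_i = 1] = \E[W_i | Z_i = 1] + \E[X_i | Z_i = 1]$, while on $\{Z_i = 0\}$ the product term $X_i Z_i$ vanishes identically, so $\E[V_i | Z_i = 0] = \E[W_i | Z_i = 0]$. Then I would invoke Assumption 3, $(W_i, X_i, Y_i) \independent Z_i$, which yields $\E[W_i | Z_i = 1] = \E[W_i | Z_i = 0] = \E[W_i]$ and $\E[X_i | Z_i = 1] = \E[X_i] = \Pr[X_i = 1] = \mu$. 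Subtracting, the $\E[W_i]$ terms cancel and we are left with $\E[V_i | Z_i = 1] - \E[V_i | Z_i = 0] = \mu$, which is \eqref{identification2}.

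There is essentially no obstacle here: Assumption 1 (Monotonicity) plays no role, and the only points requiring a small amount of care are (i) confirming that the conditioning events have positive probability, which Assumption 4 supplies, and (ii) noting that $X_i Z_i = 0$ on $\{Z_i = 0\}$, so that no assumption on the behavior of $X_i$ given $Z_i = 0$ is needed beyond the independence already imposed. Alternatively, one could simply cite line \eqref{eq:diff} of the proof of Lemma \ref{id} directly, since that display already records $\E[V_i | Z_i = 1] - \E[V_i | Z_i = 0] = \mu$ under Assumptions 2 and 3.
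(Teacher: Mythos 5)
Your proof is correct and is essentially the argument the paper itself gestures at: the paper does not write out a separate proof of Lemma \ref{id2} (it cites Imai 2011, p.~409), but the display \eqref{eq:diff} inside the proof of Lemma \ref{id} records exactly this computation under Assumptions 2 and 3. Your version simply fills in that standard step cleanly (and, if anything, states it more carefully than the paper's informal ``$W_i + \mu - W_i$'' shorthand, which conflates $W_i$ with $\E[W_i]$), so there is nothing further to add.
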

\noindent Lemma \ref{id2} has been proven by, e.g., Imai (2011, p. 409).

\section{Estimation, Inference and Efficiency}

In this section, we propose a simple nonparametric estimator of $\mu$ based on (1) and provide a basis for inference using Wald-type confidence intervals under a normal approximation. We also prove that our estimator is asymptotically more efficient than the standard difference-in-means estimator for the list experiment alone.

Define the sample means 
\[ \overline{Y} = \frac{1}{n}\sum_{i=1}^n Y_i \]
\[ \overline{V}_{1,0} = \frac{\displaystyle \sum_{i=1}^n V_i\ \indicator{Z_i=1,Y_i=0}}{ \displaystyle \sum_{i=1}^n \indicator{Z_i=1,Y_i=0}} 
  \qquad\text{and}\qquad
 \overline{V}_{0,0} = \frac{\displaystyle \sum_{i=1}^n V_i\ \indicator{Z_i=0,Y_i=0}}{ \displaystyle \sum_{i=1}^n \indicator{Z_i=0,Y_i=0}}  \]
where $\indicator{\cdot}$ is the indicator function.  Define an estimator of $\mu$ based on \eqref{identification},
\[ \hat\mu = \overline{Y} + (1 - \overline{Y}) \left(\overline{V}_{1,0}  - \overline{V}_{0,0}\right). \]

We can derive results on the sampling variance of $\hat\mu$.  Let $\gamma=\Pr(Z_i=1)$ be the probability of receiving the treatment question in the list experiment.
\begin{prop}\label{variance}
Given Assumption 4 (Non-degenerate Distributions), the asymptotic variance of $\hat\mu$ is characterized by
\begin{align}
    \underset{n\rightarrow \infty}{\plim}  n\Var \left[ \hat\mu  \right] = \frac{(1-\mu)^2}{1-\mu p}  \mu p + (1-\mu p) \Bigg[ \frac{\Var[V_i | Z_i = 1, Y_i = 0]}{\gamma} + \frac{\Var[V_i | Z_i = 0, Y_i = 0]}{1-\gamma}   \Bigg].
  \label{asyvar}
\end{align}
\end{prop}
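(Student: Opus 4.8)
The plan is to treat $\hat\mu=\overline{Y}+(1-\overline{Y})(\overline{V}_{1,0}-\overline{V}_{0,0})$ as a smooth function of sample means of i.i.d.\ bounded random vectors, invoke the multivariate central limit theorem and the delta method to obtain joint asymptotic normality, and then simplify the limiting variance. Each $\overline{V}_{z,0}$ is a ratio of two such means, $\overline{V}_{z,0}=\big(\tfrac1n\sum_i V_i\indicator{Z_i=z,Y_i=0}\big)\big/\big(\tfrac1n\sum_i\indicator{Z_i=z,Y_i=0}\big)$. Assumption 4 gives $0<\mu p<1$ and $0<\gamma<1$, so the population denominators $\Pr[Z_i=1,Y_i=0]=\gamma(1-\mu p)$ and $\Pr[Z_i=0,Y_i=0]=(1-\gamma)(1-\mu p)$ are bounded away from zero; hence $\hat\mu$ is well defined with probability tending to one, and since all variables are bounded a routine uniform-integrability argument identifies $\plim n\Var[\hat\mu]$ with the asymptotic variance delivered by the delta method.

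First I would linearize around the limits. Writing Lemma~\ref{id} as $\mu=\mu p+(1-\mu p)\delta$ with $\delta:=\E[V_i|Z_i=1,Y_i=0]-\E[V_i|Z_i=0,Y_i=0]$ and setting $\hat\delta:=\overline{V}_{1,0}-\overline{V}_{0,0}$, elementary algebra gives $\hat\mu-\mu=(1-\delta)(\overline{Y}-\mu p)+(1-\overline{Y})(\hat\delta-\delta)$, and one checks $1-\delta=(1-\mu)/(1-\mu p)$. Because $\overline{Y}\arrowp\mu p$, Slutsky's theorem lets me replace $1-\overline{Y}$ by $1-\mu p$ to leading order, so that
\[
\sqrt n\,(\hat\mu-\mu)=\frac{1-\mu}{1-\mu p}\,\sqrt n\,(\overline{Y}-\mu p)+(1-\mu p)\,\sqrt n\,(\hat\delta-\delta)+o_p(1).
\]

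Next I would compute the asymptotic second moments of the two leading terms. For $\overline{Y}$ this is immediate: $\Var[Y_i]=\mu p(1-\mu p)$, so $\plim n\Var[\overline{Y}]=\mu p(1-\mu p)$. For $\hat\delta$ I would use the influence-function (delta-method) expansion of each ratio, $\sqrt n\,(\overline{V}_{z,0}-\E[V_i|Z_i=z,Y_i=0])=n^{-1/2}\sum_i\psi^{(z)}_i+o_p(1)$ with $\psi^{(z)}_i=\indicator{Z_i=z,Y_i=0}\,(V_i-\E[V_i|Z_i=z,Y_i=0])/\Pr[Z_i=z,Y_i=0]$, which yields $\plim n\Var[\overline{V}_{z,0}]=\Var[V_i|Z_i=z,Y_i=0]/\Pr[Z_i=z,Y_i=0]$. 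Since $\indicator{Z_i=1,Y_i=0}$ and $\indicator{Z_i=0,Y_i=0}$ have disjoint support, $\psi^{(1)}_i\psi^{(0)}_i\equiv0$, so $\overline{V}_{1,0}$ and $\overline{V}_{0,0}$ are asymptotically uncorrelated and $\plim n\Var[\hat\delta]=\Var[V_i|Z_i=1,Y_i=0]/(\gamma(1-\mu p))+\Var[V_i|Z_i=0,Y_i=0]/((1-\gamma)(1-\mu p))$. Finally the cross term vanishes, since the asymptotic covariance of the two leading terms is a multiple of $\E[(Y_i-\mu p)\psi^{(z)}_i]$, which equals $0$ because on $\{Y_i=0\}$ the factor $Y_i-\mu p$ is the constant $-\mu p$ while $\E[\indicator{Z_i=z,Y_i=0}(V_i-\E[V_i|Z_i=z,Y_i=0])]=0$. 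Substituting the three pieces into the variance of the displayed linear combination gives
\[
\plim n\Var[\hat\mu]=\Big(\tfrac{1-\mu}{1-\mu p}\Big)^{2}\mu p(1-\mu p)+(1-\mu p)^{2}\!\left[\tfrac{\Var[V_i|Z_i=1,Y_i=0]}{\gamma(1-\mu p)}+\tfrac{\Var[V_i|Z_i=0,Y_i=0]}{(1-\gamma)(1-\mu p)}\right],
\]
which collapses to \eqref{asyvar}.

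The step I expect to be the main obstacle is the bookkeeping for the two ratio estimators. Because their denominators are random, one must justify (via the delta method applied to $(b,c)\mapsto b/c$, equivalently the influence-function expansion above) that each $\overline{V}_{z,0}$ behaves asymptotically like a mean of $n\Pr[Z_i=z,Y_i=0]$ i.i.d.\ draws, and one must verify that \emph{every} pairwise asymptotic covariance among $\overline{Y}$, $\overline{V}_{1,0}$, and $\overline{V}_{0,0}$ other than the diagonal terms computed above is zero. Given Assumption 4 none of this is deep, but it is where care is needed; the remaining simplification is routine algebra.
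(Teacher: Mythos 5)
Your proposal is correct and takes essentially the same route as the paper's proof: both linearize $\hat\mu$ around the population quantities, use $1-\delta=(1-\mu)/(1-\mu p)$, and sum the three variance contributions (the $\overline{Y}$ term and the two within-cell ratio-estimator terms scaled by the cell probabilities $\gamma(1-\mu p)$ and $(1-\gamma)(1-\mu p)$) to obtain \eqref{asyvar}. The only difference is bookkeeping: the paper writes $\hat\mu = 1+(1-\overline{Y})\left(\overline{V}_{1,0}-\overline{V}_{0,0}-1\right)$ and applies a product-variance formula to the two (asymptotically independent) factors, whereas you expand additively with influence functions and check explicitly that the cross-terms vanish --- a somewhat more careful rendering of the paper's ``linearized variances'' computation, but not a different argument.
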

\noindent Proofs for Propositions 1-3 are given in Appendix B. Under Assumptions 1-4, $\hat\mu$ is root-$n$ consistent and asymptotically normal, with a consistent estimator of the variance obtained by substituting sample analogues (i.e., sample means and sample variances) for population quantities. Namely, let 
$$
\widehat \Var\left[\hat\mu\right] = \frac{(1-\hat \mu)^2}{1- \hat \mu p}  \hat \mu \hat p + (1- \hat \mu \hat p) \Bigg[ \frac{\hat \sigma^2(V_i | Z_i = 1, Y_i = 0)}{\hat \gamma} + \frac{\hat \sigma^2(V_i | Z_i = 0, Y_i = 0)}{1-\hat \gamma}   \Bigg],
$$
where $\hat \sigma^2 (\cdot)$ denotes the sample variance, $\hat \gamma = \sum_{i=1}^n Z_i/n$, and \[\hat p = \frac{\displaystyle \sum_{i=1}^n Y_i X_i}{ \displaystyle \sum_{i=1}^n X_i}.\]
These properties are sufficient for construction of Wald-type confidence intervals using $\widehat \Var\left[\hat\mu\right]$.

\begin{prop}\label{confidenceintervals}
If Assumptions 1-4 hold, then confidence intervals constructed as $\hat\mu \pm z_{1-\alpha/2} \sqrt { \widehat \Var\left[\hat\mu\right] }$ will have $\mu$ $100(1-\alpha)\%$ coverage for $\mu$ for large $n$.
\end{prop}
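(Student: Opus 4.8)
The plan is to deduce Proposition~\ref{confidenceintervals} from two facts that are essentially already in hand: (i) $\sqrt{n}(\hat\mu-\mu) \stackrel{d}{\rightarrow} \mathrm{N}(0,\sigma^2)$, where $\sigma^2$ denotes the probability limit of $n\Var[\hat\mu]$ given in Proposition~\ref{variance}; and (ii) $n\,\widehat\Var[\hat\mu]\arrowp\sigma^2$. Given (i) and (ii), Slutsky's theorem finishes the argument: the studentized statistic $T_n := (\hat\mu-\mu)\big/\sqrt{\widehat\Var[\hat\mu]} = \sqrt{n}(\hat\mu-\mu)\big/\sqrt{n\,\widehat\Var[\hat\mu]}$ has numerator converging in law to $\mathrm{N}(0,\sigma^2)$ and denominator converging in probability to $\sigma>0$, hence $T_n\stackrel{d}{\rightarrow}\mathrm{N}(0,1)$, and therefore $\Pr\big[\mu\in\hat\mu\pm z_{1-\alpha/2}\sqrt{\widehat\Var[\hat\mu]}\big]=\Pr[|T_n|\le z_{1-\alpha/2}]\to 1-\alpha$.

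For (i), I would write $\hat\mu = g(\bar{T}_n)$, where $\bar{T}_n$ is the vector of sample averages of the i.i.d.\ terms $Y_i$, $V_i\indicator{Z_i=1,Y_i=0}$, $\indicator{Z_i=1,Y_i=0}$, $V_i\indicator{Z_i=0,Y_i=0}$, $\indicator{Z_i=0,Y_i=0}$, and $g$ is the rational map $(\bar Y, a_1, b_1, a_0, b_0)\mapsto \bar Y + (1-\bar Y)(a_1/b_1 - a_0/b_0)$. Since $V_i$ is bounded (it lies in $\{0,\ldots,J+1\}$), all relevant second moments are finite and the multivariate CLT applies to $\bar T_n$. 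By Assumption~4 and Treatment Independence, $\Pr[Z_i=z,Y_i=0]=\Pr[Z_i=z]\Pr[Y_i=0]>0$, so the population analogues of $b_1,b_0$ are strictly positive and $g$ is continuously differentiable in a neighborhood of $\E[\bar T_n]$; the delta method then yields asymptotic normality, and in particular $\hat\mu\arrowp\mu$. That the resulting asymptotic variance equals the $\sigma^2$ of Proposition~\ref{variance} is exactly the computation carried out in Appendix~B for that proposition, so no new work is needed here.

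For (ii), I would check that each ingredient of $\widehat\Var[\hat\mu]$ converges in probability to its population counterpart: $\hat\mu\arrowp\mu$ (just noted), $\hat\gamma\arrowp\gamma$ and $\hat p\arrowp p$ by the weak law of large numbers, and $\hat\sigma^2(V_i\mid Z_i=z,Y_i=0)\arrowp\Var[V_i\mid Z_i=z,Y_i=0]$ by the weak law applied within the (asymptotically nonempty) subpopulations. The map from these quantities into the formula for $n\,\widehat\Var[\hat\mu]$ is continuous at the true values because its denominators $1-\mu p$, $\gamma$, and $1-\gamma$ are all strictly positive: $\gamma\in(0,1)$ follows from $\Var[Z_i]>0$, and $\mu p=\E[Y_i]\in(0,1)$ from $\Var[Y_i]>0$. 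The continuous mapping theorem then gives $n\,\widehat\Var[\hat\mu]\arrowp\sigma^2$, and $\sigma^2>0$ since the bracketed term in Proposition~\ref{variance} is positive by Assumption~4, so the Slutsky step has a denominator bounded away from zero in the limit.

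The genuinely substantive step is (i) --- in particular, getting the delta-method bookkeeping right for the subpopulation ratios $\overline V_{1,0}$ and $\overline V_{0,0}$ and confirming that the resulting variance matches the stated expression. But this is precisely what the proof of Proposition~\ref{variance} already establishes, so Proposition~\ref{confidenceintervals} reduces to recording that asymptotic normality, a consistent variance estimator, and Slutsky's theorem together deliver asymptotically valid Wald intervals; the remaining effort is just verifying the positivity conditions above so that Slutsky's theorem and the continuous mapping theorem may be invoked.
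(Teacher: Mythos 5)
Your proposal is correct and follows essentially the same route as the paper's proof: asymptotic normality of $\hat\mu$ (the paper cites joint asymptotic normality of $\bigl(1-\overline{Y},\,\overline{V}_{1,0},\,\overline{V}_{0,0}-1\bigr)$, which your delta-method argument makes explicit), consistency of $\widehat\Var[\hat\mu]$ by plugging in sample analogues, and Slutsky's theorem to studentize and obtain asymptotic $100(1-\alpha)\%$ coverage. Your write-up merely fills in details (the CLT/delta-method bookkeeping and the positivity conditions guaranteed by Assumptions 3--4) that the paper leaves implicit.
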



An estimator based upon \eqref{identification} will have efficiency gains relative to standard estimators based upon \eqref{identification2}.  Consider the standard difference-in-means-based estimator for the list experiment,
\[ \hat\mu_S = \overline{V}_1 - \overline{V}_0 \]
where 
\[ \overline{V}_1 = \frac{\sum_{i=1}^n V_i Z_i}{\sum_{i=1}^n Z_i} \qquad\text{and}\qquad \overline{V}_0 = \frac{\sum_{i=1}^n V_i (1- Z_i)}{\sum_{i=1}^n (1- Z_i)} \]
We now show that the combined estimator $\hat\mu$ is asymptotically more precise than $\hat\mu_S$.
\begin{prop}\label{efficient}
Under Assumptions 1-4 (Monotonicity, No Liars, No Design Effects, Treatment Independence, and Non-degenerate Distributions),
\[ \underset{n\rightarrow\infty}{\plim} n\Var \left[ \hat\mu  \right]  < 
  \underset{n\rightarrow\infty}{\plim} n\Var \left[ \hat\mu_S  \right].  \]
\end{prop}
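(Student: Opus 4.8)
The plan is to put $\underset{n\rightarrow\infty}{\plim}\, n\Var[\hat\mu_S]$ in closed form, subtract from it the expression for $\underset{n\rightarrow\infty}{\plim}\, n\Var[\hat\mu]$ supplied by Proposition~\ref{variance}, and exhibit the difference as a sum of manifestly nonnegative terms, at least one of which is strictly positive under Assumption~4. The engine of the argument is the law of total variance applied within each treatment arm, conditioning on the direct-question response $Y_i$.

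First I would record the asymptotic variance of the difference-in-means estimator. Since $\hat\mu_S=\overline V_1-\overline V_0$ is an ordinary two-sample difference of means and $\gamma\in(0,1)$ by Assumption~4, the central limit theorem together with Slutsky's theorem gives $\underset{n\rightarrow\infty}{\plim}\, n\Var[\hat\mu_S]=\Var[V_i\mid Z_i=1]/\gamma+\Var[V_i\mid Z_i=0]/(1-\gamma)$. Next I would decompose each conditional variance by conditioning on $Y_i$; by Treatment Independence $Y_i\independent Z_i$, so $\Pr[Y_i=1\mid Z_i=z]=\mu p$ for both $z$, and
\[
\Var[V_i\mid Z_i=z]=\mu p\,\Var[V_i\mid Z_i=z,Y_i=1]+(1-\mu p)\,\Var[V_i\mid Z_i=z,Y_i=0]+\mu p(1-\mu p)\big(\E[V_i\mid Z_i=z,Y_i=1]-\E[V_i\mid Z_i=z,Y_i=0]\big)^2 .
\]
Two consequences of the remaining assumptions then do the work. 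On $\{Y_i=1\}$, Monotonicity forces $X_i=1$, so $V_i=W_i+Z_i$ there; with Treatment Independence this makes $\Var[V_i\mid Z_i=z,Y_i=1]=\Var[W_i\mid Y_i=1]$ free of $z$, and $\E[V_i\mid Z_i=1,Y_i=1]-\E[V_i\mid Z_i=0,Y_i=1]=1$. On $\{Y_i=0\}$ the mixture of truthful non-engagers and withholders gives $\E[V_i\mid Z_i=1,Y_i=0]-\E[V_i\mid Z_i=0,Y_i=0]=\E[X_i\mid Y_i=0]=(1-p)\mu/(1-\mu p)$, whence $1-\E[X_i\mid Y_i=0]=(1-\mu)/(1-\mu p)$.

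Now I would subtract. The terms $(1-\mu p)\Var[V_i\mid Z_i=1,Y_i=0]/\gamma$ and $(1-\mu p)\Var[V_i\mid Z_i=0,Y_i=0]/(1-\gamma)$ appear verbatim in the expansion above and in the formula of Proposition~\ref{variance}, so they cancel. Writing $\tau^2:=\Var[V_i\mid Z_i=z,Y_i=1]$ and $\delta:=\E[V_i\mid Z_i=0,Y_i=1]-\E[V_i\mid Z_i=0,Y_i=0]$, and collecting the surviving cross-terms (the two multiples of the squared between-arm differences of conditional means) together with the leftover negative term $-(1-\mu)^2\mu p/(1-\mu p)$ from Proposition~\ref{variance}, I expect the residual to collapse to
\[
\underset{n\rightarrow\infty}{\plim}\, n\Var[\hat\mu_S]-\underset{n\rightarrow\infty}{\plim}\, n\Var[\hat\mu]=\frac{\mu p\,\tau^2}{\gamma(1-\gamma)}+\frac{\mu p}{\gamma}\left(\sqrt{\frac{1-\mu p}{1-\gamma}}\;\delta+(1-\mu)\sqrt{\frac{1-\gamma}{1-\mu p}}\right)^{2},
\]
the point being that the quadratic $A\delta^2+2B\delta+C$ produced by the cross-terms and the negative term has $A=(1-\mu p)/(1-\gamma)$, $B=1-\mu$, $C=(1-\gamma)(1-\mu)^2/(1-\mu p)$, so $AC=B^2$ and it is a perfect square. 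Since $\mu p=\Pr[Y_i=1]\in(0,1)$, $\gamma\in(0,1)$, and $\tau^2=\Var[V_i\mid Z_i=z,Y_i=1]>0$, all by Assumption~4, the right-hand side is bounded below by $\mu p\,\tau^2/(\gamma(1-\gamma))>0$, which is the claim.

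I expect the only real obstacle to be the bookkeeping in the last step: checking that, after the $Y_i=0$ variance terms cancel, the remaining cross-terms plus $-(1-\mu)^2\mu p/(1-\mu p)$ reassemble exactly into the nonnegative square displayed above. This hinges on computing the two arm-wise differences of conditional means precisely — the ``$1$'' on $\{Y_i=1\}$ from Monotonicity and the ``$(1-p)\mu/(1-\mu p)$'' on $\{Y_i=0\}$ from the mixture structure — and on the algebraic coincidence $AC=B^2$; everything else is routine second-moment algebra.
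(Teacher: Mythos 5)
Your proposal is correct and follows essentially the same route as the paper's proof: both expand $\plim_{n\rightarrow\infty} n\Var[\hat\mu_S]$ by the law of total variance conditional on $Y_i$ within each treatment arm, cancel the shared $(1-\mu p)\Var[V_i\mid Z_i=z,Y_i=0]$ terms against the expression in Proposition~\ref{variance}, use the mean relations implied by Assumptions 1--3 (the shift of $1$ on $\{Y_i=1\}$ and $(1-p)\mu/(1-\mu p)$ on $\{Y_i=0\}$), and obtain strict positivity from the $Y_i=1$ conditional variances under Assumption 4. The only cosmetic difference is that you complete the square explicitly (your $AC=B^2$ identity, which I verified), whereas the paper bounds the same residual quadratic from below by minimizing over $\E[V_i\mid Z_i=0,Y_i=0]-\E[V_i\mid Z_i=0,Y_i=1]$ via a first-order condition.
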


\section{Placebo Tests}

In this section, we derive two placebo tests to assess the validity of the identifying assumptions.

\subsection{Placebo Test I}

\label{sec:firstplacebo}
It is possible to jointly test the Monotonicity, No Liars, No Design Effects, and Treatment Independence assumptions. Under these assumptions, for all $t$, $\Pr[V_i = t | Z_i = 0, Y_i = 1] = \Pr[V_i = (t + 1) | Z_i = 1, Y_i = 1]$, thus tests of distributional equality are appropriate. Any valid test of distributional equality between $V_i$ (under $Z_i = 0, Y_i = 1$) and $V_i + 1$ (under $Z_i = 1, Y_i = 1$) will permit rejection of the null. 

However, since distributional equality implies that $\E[V_i | Z_i = 1, Y_i = 1] -  \E[V_i | Z_i = 0, Y_i = 1] = 1$, a simple test is available. Define $\beta = \E[V_i | Z_i = 1, Y_i = 1] -  \E[V_i | Z_i = 0, Y_i = 1]$. Consider estimators
\[ \hat\beta = \overline{V}_{1,1} - \overline{V}_{0,1} \]
and  
\[ \widehat\Var[\hat\beta] =  \frac{\widehat \sigma^2(V_i | Z_i = 1,Y_i = 1)}{\sum_{i=1}^n \indicator{Z_i = 1,Y_i = 1}} + \frac{\widehat\sigma^2(V_i | Z_i = 0,Y_i = 1)}{\sum_{i=1}^n \indicator{Z_i = 0,Y_i = 1}}. \]

\begin{prop}\label{placebo}
Under the null hypothesis that Assumptions 1-4 (Monotonicity, No Liars, No Design Effects, and Treatment Independence) hold, $\beta = 1$. For large $n$, if Assumption 4 (Non-degenerate Distributions) holds, then a two-sided $p$-value is given by 
$$
2 \Phi\left(-{\vert \hat \beta - 1 \vert} / {\sqrt{\widehat\Var[\hat\beta]}}\right),
$$
where $\Phi(.)$ is the normal CDF.
\end{prop}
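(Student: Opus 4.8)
The plan is to prove the two assertions separately: first the population identity $\beta = 1$ under the null, then the large-$n$ validity of the stated $p$-value. For the identity, I would restrict attention to the subpopulation $\{Y_i = 1\}$. Monotonicity (Assumption 1) forces $X_i = 1$ whenever $Y_i = 1$, so $\E[X_i \mid Y_i = 1] = 1$; combined with Treatment Independence (Assumption 3, which in particular gives $Y_i \independent Z_i$), one has $\E[X_i \mid Z_i = z, Y_i = 1] = 1$ for $z \in \{0,1\}$. Substituting the decomposition $V_i = W_i + X_i Z_i$ from Assumption 2 into the two conditional expectations and using Assumption 3 again to strip the conditioning on $Z_i$ from the $W_i$ terms yields
\[
\E[V_i \mid Z_i = 1, Y_i = 1] = \E[W_i \mid Y_i = 1] + 1, \qquad \E[V_i \mid Z_i = 0, Y_i = 1] = \E[W_i \mid Y_i = 1],
\]
and subtracting gives $\beta = 1$. (Equivalently, this is the first-moment consequence of the distributional equality $\Pr[V_i = t \mid Z_i = 0, Y_i = 1] = \Pr[V_i = t+1 \mid Z_i = 1, Y_i = 1]$ noted just above the statement.)

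For the $p$-value, I would establish that $\hat\beta = \overline{V}_{1,1} - \overline{V}_{0,1}$ is asymptotically normal around $\beta$ with asymptotic variance consistently estimated by $\widehat\Var[\hat\beta]$. Since $Y_i \independent Z_i$ and $\Var[Z_i] > 0$, $\Var[Y_i] > 0$ (Assumption 4), each event $\{Z_i = z, Y_i = 1\}$ has probability $\Pr[Z_i = z]\Pr[Y_i = 1] \in (0,1)$, so the subsample sizes grow linearly in $n$. Treating the vector $\left(n^{-1}\sum_i V_i \indicator{Z_i=z,Y_i=1},\ n^{-1}\sum_i \indicator{Z_i=z,Y_i=1}\right)_{z\in\{0,1\}}$ jointly, applying the multivariate CLT (using that $V_i$ is a bounded item count and that $\Var[V_i \mid Z_i=z, Y_i=1] > 0$ by Assumption 4), and then the delta method to the two ratios shows $\sqrt{n}\left(\overline{V}_{z,1} - \E[V_i\mid Z_i=z,Y_i=1]\right) \Rightarrow N\!\left(0,\ \Var[V_i\mid Z_i=z,Y_i=1]/\Pr[Z_i=z,Y_i=1]\right)$, with the $z=0$ and $z=1$ limits independent because the relevant cross-covariances vanish (the indicators are for mutually exclusive events). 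Adding gives asymptotic normality of $\sqrt{n}(\hat\beta - \beta)$; the sample variances are consistent for the conditional variances and the scaled counts for the probabilities, so $n\,\widehat\Var[\hat\beta]$ converges in probability to that same asymptotic variance, and Slutsky's theorem yields $(\hat\beta-\beta)/\sqrt{\widehat\Var[\hat\beta]} \Rightarrow N(0,1)$. Under the null $\beta = 1$, hence the two-sided tail probability $\Pr\!\left[|N(0,1)| \ge |\hat\beta - 1|/\sqrt{\widehat\Var[\hat\beta]}\right] = 2\Phi\!\left(-|\hat\beta - 1|/\sqrt{\widehat\Var[\hat\beta]}\right)$ is the asymptotically valid $p$-value.

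The main obstacle is the asymptotic normality of $\hat\beta$ viewed as a difference of two subsample means whose sizes are themselves random; the cleanest route around this is the joint-CLT-plus-delta-method argument sketched above, which keeps the moment bookkeeping to the conditions already supplied by Assumption 4 plus boundedness of $V_i$. The population identity and the consistency of the plug-in variance estimator are then routine, paralleling the arguments behind Propositions~\ref{variance} and \ref{confidenceintervals}.
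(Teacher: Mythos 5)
Your proposal is correct and follows essentially the same route as the paper, which only sketches this result: the identity $\beta=1$ is the first-moment consequence of the distributional equality $\Pr[V_i = t \mid Z_i=0, Y_i=1] = \Pr[V_i = t+1 \mid Z_i=1, Y_i=1]$ noted in Section 4.1 (itself derived exactly as you do, from Monotonicity forcing $X_i=1$ on $\{Y_i=1\}$, $V_i = W_i + X_iZ_i$, and Treatment Independence), and the asymptotic validity of the Wald-type $p$-value is asserted to follow from ``calculations analogous to those for Proposition~\ref{confidenceintervals}.'' Your joint-CLT-plus-delta-method treatment of the random subsample sizes, the vanishing cross-covariance between the two subgroup means, and the Slutsky step simply fill in those details correctly.
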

\noindent A proof for Proposition \ref{placebo} follows directly from calculations analogous to those for Proposition \ref{confidenceintervals}. 

We also explore the power of Placebo Test I using a series of Monte Carlo simulations. We vary a number of factors, including: the number of subjects answering ``Yes'' to the direct question, the proportions of false confessors, liars, and the design-affected, and the variance of responses to the control list. These results are presented in Appendix C.

\subsection{Placebo Test II}
\label{sec:secondplacebo}
We can probe the validity of the Treatment Independence assumption with a second placebo test. If the answer to the direct question is statistically nonindependent of the treatment assignment (i.e, $Y_i \not\independent Z_i$), then it must be the case that the Treatment Independence assumption is invalid. Define $\delta = \E[Y_i | Z_i = 1] - \E[Y_i | Z_i = 0]$. Consider the estimators
$$\hat\delta =  \frac{\sum_{i=1}^n Y_i Z_i}{\sum_{i=1}^nZ_i} - \frac{\sum_{i=1}^n Y_i (1-Z_i)}{\sum_{i=1}^n (1-Z_i)}$$
and
%

\[ \widehat\Var[\hat\delta] =  \frac{\widehat \sigma^2(Y_i | Z_i = 1)}{\sum_{i=1}^n Z_i} + \frac{\widehat \sigma^2(Y_i | Z_i = 0)}{\sum_{i=1}^n (1-Z_i)}. \]

\begin{prop}\label{placebo2}
Under the null hypothesis that Assumption 3 holds, $\delta = 0$. For large $n$, if Assumption 4 (Non-degenerate Distributions) holds, then a two-sided $p$-value is given by 
$$
2 \Phi\left(-{\vert \hat \delta \vert} / {\sqrt{\widehat\Var[\hat\delta]}}\right).
$$
\end{prop}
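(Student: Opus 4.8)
The plan is to establish Proposition \ref{placebo2} by the same route that underlies Proposition \ref{confidenceintervals} and Proposition \ref{placebo}: first verify the null value, then invoke a central limit theorem together with Slutsky's theorem to obtain an asymptotically pivotal Wald statistic. For the null value, note that under Assumption 3 we have $(W_i, X_i, Y_i) \independent Z_i$, so in particular $Y_i \independent Z_i$; hence $\E[Y_i \mid Z_i = 1] = \E[Y_i \mid Z_i = 0] = \E[Y_i]$, which gives $\delta = 0$ immediately. This is the easy half and requires only a single line.

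Next I would show that $\hat\delta$ is root-$n$ consistent and asymptotically normal. Since $\hat\delta$ is a difference of two subsample means of the bounded (indeed $\{0,1\}$-valued) random variable $Y_i$, with the subsamples determined by the randomized $Z_i$, the standard argument applies: conditioning on the treatment-group sizes, the two sample means are independent, each satisfies a CLT by the Lindeberg--L\'evy theorem (Assumption 4 guarantees $\Var[Y_i] > 0$ and, together with $\Var[Z_i] > 0$, that both group sizes grow linearly in $n$ with probability tending to one), and the group proportions $\sum_i Z_i / n$ and $\sum_i (1-Z_i)/n$ converge in probability to $\gamma$ and $1-\gamma$. Combining these via Slutsky's theorem yields
\[
\sqrt{n}\,(\hat\delta - \delta) \stackrel{d}{\longrightarrow} \mathrm{N}\!\left(0,\ \frac{\Var[Y_i \mid Z_i = 1]}{\gamma} + \frac{\Var[Y_i \mid Z_i = 0]}{1-\gamma}\right).
\]
The variance estimator $\widehat\Var[\hat\delta]$ is the obvious plug-in: the sample variances $\widehat\sigma^2(Y_i \mid Z_i = z)$ are consistent for $\Var[Y_i \mid Z_i = z]$ by the law of large numbers, and $\sum_i Z_i / n \arrowp \gamma$, so $n \cdot \widehat\Var[\hat\delta]$ converges in probability to the asymptotic variance above. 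A final application of Slutsky's theorem gives that $\hat\delta / \sqrt{\widehat\Var[\hat\delta]}$ is asymptotically standard normal under the null $\delta = 0$, from which the two-sided $p$-value $2\Phi(-|\hat\delta| / \sqrt{\widehat\Var[\hat\delta]})$ follows by the usual duality between Wald tests and the normal CDF.

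There is no serious obstacle here — the result is essentially the textbook two-sample difference-in-means asymptotics specialized to a Bernoulli outcome under randomization. The only point requiring a modicum of care is the handling of the random denominators $\sum_i Z_i$ and $\sum_i(1-Z_i)$: one should condition on these (or equivalently note they concentrate around $n\gamma$ and $n(1-\gamma)$) before applying the CLT, and check that Assumption 4's requirement $\Var[Z_i] > 0$ rules out the degenerate case $\gamma \in \{0,1\}$ in which one subsample is empty. Because Proposition \ref{placebo2} concerns only $Y_i$ and $Z_i$ and does not involve $V_i$, it is actually the simplest of the three Wald-type results and, as the paper notes for Proposition \ref{placebo}, follows by calculations directly analogous to those for Proposition \ref{confidenceintervals}.
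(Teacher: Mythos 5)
Your proposal is correct and follows exactly the route the paper intends: the null value $\delta = 0$ falls out of Assumption 3 since $Y_i \independent Z_i$, and the $p$-value formula is the standard two-sample Wald argument (CLT for the group means, consistent plug-in variance, Slutsky), which is precisely the ``calculations analogous to those for Proposition \ref{confidenceintervals}'' that the paper invokes without spelling out. Your added care about the random group sizes and the role of $\Var[Z_i] > 0$, $\Var[Y_i] > 0$ from Assumption 4 is a sensible bit of rigor but does not change the substance.
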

\noindent A proof for Proposition \ref{placebo2} again follows directly from calculations analogous to those for Proposition \ref{confidenceintervals}. When the treatment is randomly assigned, then Placebo Test II is a simply a test of whether $Z_i$ has a causal effect on $Y_i$. Accordingly, under random assignment of $Z_i$, if treatment assignment follows the administration of the direct question, then Assumption 3 must hold.



\section{Application}
\label{sec:application}
We tested the properties of our estimator with a pair of studies carried out on Amazon's Mechanical Turk service, an internet platform where subjects perform a wide variety of tasks in return for compensation. Our main purpose was to assess the properties of our estimator with experimental data using an array of different
behaviors, some of which may
be considered socially sensitive. The relative anonymity of internet surveys provides a favorable environment for list experiments precisely because we expect subjects to withhold less often than they might in face-to-face or telephone settings. 

\subsection{Experimental Design}
The two studies were each comprised of the same five list experiments associated with five direct questions. The exact wording of the list experiments and direct questions is given in Appendix A. In three list experiments, we chose topics that are not socially sensitive: preferences over alternative energy sources, neighborhood characteristics, and news organizations. Two of the five list experiments dealt with racial and religious prejudice, topics where we would expect some withholding of anti-Hispanic and anti-Muslim sentiment. 

We recruited a convenience sample of 1,023 subjects from Mechanical Turk. We offered subjects \$1.00 to complete our survey, which is equivalent to a \$15.45 hourly rate -- a comparatively high wage by the standards of Mechanical Turk (Berinsky, Huber, and Lenz 2012). In order to defend against the potential for subjects to supply answers without reading or considering our questions, we included an ``attention question'' that required subjects to select a particular response in order to continue with the survey. Two subjects failed this quality check, and we exclude them from the main analysis. An additional seven subjects failed to respond to one or more of our questions, so we exclude them from the main analysis as well. The resulting sample size is $n = 1,014$.

Subjects were randomly assigned to either Study A or Study B with equal probability $0.5$. In Study A, direct questions were posed before the list questions, whereas in Study B, list questions were asked first. Subjects in both studies were then assigned to either the treatment or control conditions of each of the five list experiments. Table~\ref{tab:nsubjects} displays the number of subjects in each treatment condition for each study, as well as every pairwise crossing of conditions. All randomizations used Bernoulli random assignment with equal probability $0.5$. Consistent with our randomization procedure, each cell in the table (with the exception of the diagonal) contains approximately one-quarter of the subjects.

Before being randomized into treatment groups, subjects answered a series of background demographic questions. Table~\ref{tab:covbalance} shows balance statistics across age, gender, political ideology, education, and race across the treatment and control groups for the first list experiment in both studies. Our subject pool is more likely to be white, male, liberal, well-educated, and young than the general population. This pattern is consistent with the demographic description of Mechanical Turk survey respondents given by Mason and Suri (2011).   

\begin{table}[H]
\centering
\singlespacing
\caption{Number of Subjects in Each Treatment Condition} 
\begin{tabular}{cccccccccccccc}\\
  \hline 
& & \multicolumn{2}{c}{Study A or B} & \multicolumn{2}{c}{List 1}  &\multicolumn{2}{c}{List 2} & \multicolumn{2}{c}{List 3} & \multicolumn{2}{c}{List 4}  &\multicolumn{2}{c}{List 5}\\
& & A & B & T & C & T & C & T & C & T & C & T & C\\ 
 \cmidrule(r){3-4} \cmidrule(r){5-6} \cmidrule(r){7-8}  \cmidrule(r){9-10} \cmidrule(r){11-12} \cmidrule(r){13-14}
 \multirow{2}{*}{Study A or B}
  & A &  500 & 0 & 255 & 245  & 262 & 238 & 225 & 275 & 257 & 243 & 268 & 232  \\ 
 & B &  0  &  514 & 242 & 272 & 255 & 259  & 261 & 253 & 243 & 271 & 254 & 260  \\ 
  \multirow{2}{*}{List 1}
 & T &  &  & 517 &   0 & 251 & 266 & 273 & 244 & 269 & 248 & 273 & 244 \\ 
 & C &  &  &   0 & 497 & 246 & 251 & 255 & 242 & 245 & 252 & 219 & 278 \\ 
   \multirow{2}{*}{List 2}
 & T &  &  &  &  & 497 &   0 & 253 & 244 & 246 & 251 & 240 & 257 \\ 
 & C &  &  &  &  &   0 & 517 & 275 & 242 & 268 & 249 & 252 & 265 \\ 
   \multirow{2}{*}{List 3}
 & T &  &  &  &  &  &  & 528 &   0 & 269 & 259 & 263 & 265 \\ 
 & C &  &  &  &  &  &  &   0 & 486 & 245 & 241 & 229 & 257 \\ 
   \multirow{2}{*}{List 4}
 & T &  &  &  &  &  &  &  &  & 514 &   0 & 256 & 258 \\ 
 & C &  &  &  &  &  &  &  &  &   0 & 500 & 236 & 264 \\ 
   \multirow{2}{*}{List 5}
 & T &  &  &  &  &  &  &  &  &  &  & 492 &   0 \\ 
 & C &  &  &  &  &  &  &  &  &  &  &   0 & 522 \\ 
   \hline
\end{tabular}
\label{tab:nsubjects} 
\end{table}

\begin{table}[H]
\centering
\singlespacing
\caption{Covariate Balance: List Experiment 1}
\begin{tabular}{rrrrr}
  \hline
  & \multicolumn{2}{c}{Study A} & \multicolumn{2}{c}{Study B}\\
       \cmidrule(r){2-3} \cmidrule(r){4-5} 
 & \multicolumn{1}{c}{Treat} & \multicolumn{1}{c}{Control} & \multicolumn{1}{c}{Treat} & \multicolumn{1}{c}{Control} \\ 
  \hline
18 to 24 & 24.49 & 25.49 & 23.53 & 32.64 \\ 
  25 to 34 & 40.82 & 43.92 & 41.18 & 40.91 \\ 
  35 to 44 & 20.00 & 15.29 & 17.28 & 14.88 \\ 
  45 to 54 & 8.16 & 10.98 & 10.66 & 6.20 \\ 
  55 to 64 & 4.90 & 3.92 & 5.88 & 3.72 \\ 
  65 or over & 1.63 & 0.39 & 1.47 & 1.65 \\ \\
  Female & 46.12 & 47.45 & 46.32 & 42.98 \\ 
  Male & 53.88 & 52.16 & 53.31 & 57.02 \\ 
  Prefer not to say -- Gender & 0.00 & 0.39 & 0.37 & 0.00 \\ \\
  Liberal & 50.61 & 42.35 & 50.37 & 51.65 \\ 
  Moderate & 28.16 & 34.51 & 26.84 & 28.51 \\ 
  Conservative & 18.37 & 19.22 & 20.59 & 14.46 \\ 
  Haven't thought much about this & 2.86 & 3.92 & 2.21 & 5.37 \\ \\
  Less than High School & 0.41 & 0.78 & 0.74 & 1.24 \\ 
  High School / GED & 11.02 & 11.76 & 10.66 & 7.44 \\ 
  Some College & 42.45 & 42.75 & 40.44 & 40.91 \\ 
  4-year College Degree & 33.47 & 33.73 & 35.29 & 36.36 \\ 
  Graduate School & 12.65 & 10.98 & 12.87 & 14.05 \\  \\
  White, non Hispanic & 79.18 & 77.65 & 79.78 & 80.17 \\ 
  African-American & 7.76 & 10.98 & 6.62 & 4.55 \\ 
  Asian/Pacific Islander & 5.71 & 4.71 & 5.15 & 9.92 \\ 
  Hispanic & 4.90 & 3.92 & 6.25 & 4.13 \\ 
  Native American & 0.82 & 0.78 & 1.10 & 0.83 \\ 
  Other & 1.22 & 1.18 & 1.10 & 0.41 \\ 
  Prefer not to say -- Race & 0.41 & 0.78 & 0.00 & 0.00 \\ \\
 $n$ & 245 & 255 & 272 & 242 \\ 
   \hline
\end{tabular}
\label{tab:covbalance} 
\end{table}
\newpage
\subsection{Study A (Direct Questions First)}

In Study A, subjects were presented with a battery of the five direct questions before receiving the five list experiments. Table~\ref{tab:studyAests} presents three estimates of the prevalence in our subject pool. The first is a naive estimate computed by taking the average response to the direct question, $\overline{Y}$ (Direct). The remaining two estimates are $\hat\mu_S$ (Standard List) and $\hat\mu$ (Combined List). For example, the direct question estimate  $\overline{Y}$ of the percentage agreeing that Muslims should not be allowed to teach in public schools is 11\%, the list experiment estimate  $\hat\mu_S$ is 17\%, and the combined estimate $\hat\mu$  is 19\%. Of particular note are the standard errors associated with the standard list experiment as compared with those associated with the combined estimate: the reductions in estimated sampling variability are dramatic, ranging from 14\% to 67\%. As expected, reductions 
tend to be
larger when a larger number of subjects respond ``Yes'' to the direct question. Figure~\ref{fig:studyAfig} presents these results graphically: the estimates generally agree (providing confidence that the list experiments and the direct questions are measuring the same quantities), and the 95\% confidence intervals around the combined estimate are always tighter than those around the standard estimate.

\begin{table}[H]
\centering
\singlespacing
\caption{Study A (Direct First): Three Estimates of Prevalence}
\begin{tabular}{rp{1cm}p{1cm}p{1cm}p{1cm}p{1cm}p{1cm}c}
  \hline  
    \noalign{\vskip 1mm} 
   & \multicolumn{2}{c}{Direct} & \multicolumn{2}{c}{Standard List} & \multicolumn{2}{c}{Combined List} &  \multirow{2}{*}{\parbox{25mm}{\centering \% Reduction in Sampling Variance}}\\
         \noalign{\vskip 1mm} 
    \cmidrule(r){2-3} \cmidrule(r){4-5} \cmidrule(r){6-7}
  & \multicolumn{1}{c}{$\overline{Y}$} & \multicolumn{1}{c}{SE} & \multicolumn{1}{c}{$\hat\mu_S$} & \multicolumn{1}{c}{SE}& \multicolumn{1}{c}{$\hat\mu$} & \multicolumn{1}{c}{SE} &  \\
   \noalign{\vskip 1mm}  
  \hline
Nuclear Power & 0.656 & 0.021 & 0.748 & 0.084 & 0.666 & 0.049 & 66.8 \\ 
  Public Transportation & 0.538 & 0.022 & 0.513 & 0.072 & 0.627 & 0.049 & 54.0 \\ 
  Spanish-speaking & 0.102 & 0.014 & 0.035 & 0.079 & 0.042 & 0.074 & 14.0 \\ 
  Muslim Teachers & 0.110 & 0.014 & 0.166 & 0.081 & 0.187 & 0.074 & 15.4 \\ 
  CNN & 0.444 & 0.022 & 0.338 & 0.105 & 0.533 & 0.070 & 55.3 \\ 
   \hline
      \multicolumn{8}{l}{$n$ = 500 for all estimates}
\end{tabular}
\label{tab:studyAests} 
\end{table}

\begin{figure}[H]
\centering
\caption{Study A (Directs First): Three Estimates of Prevalence}
\includegraphics[scale=.45]{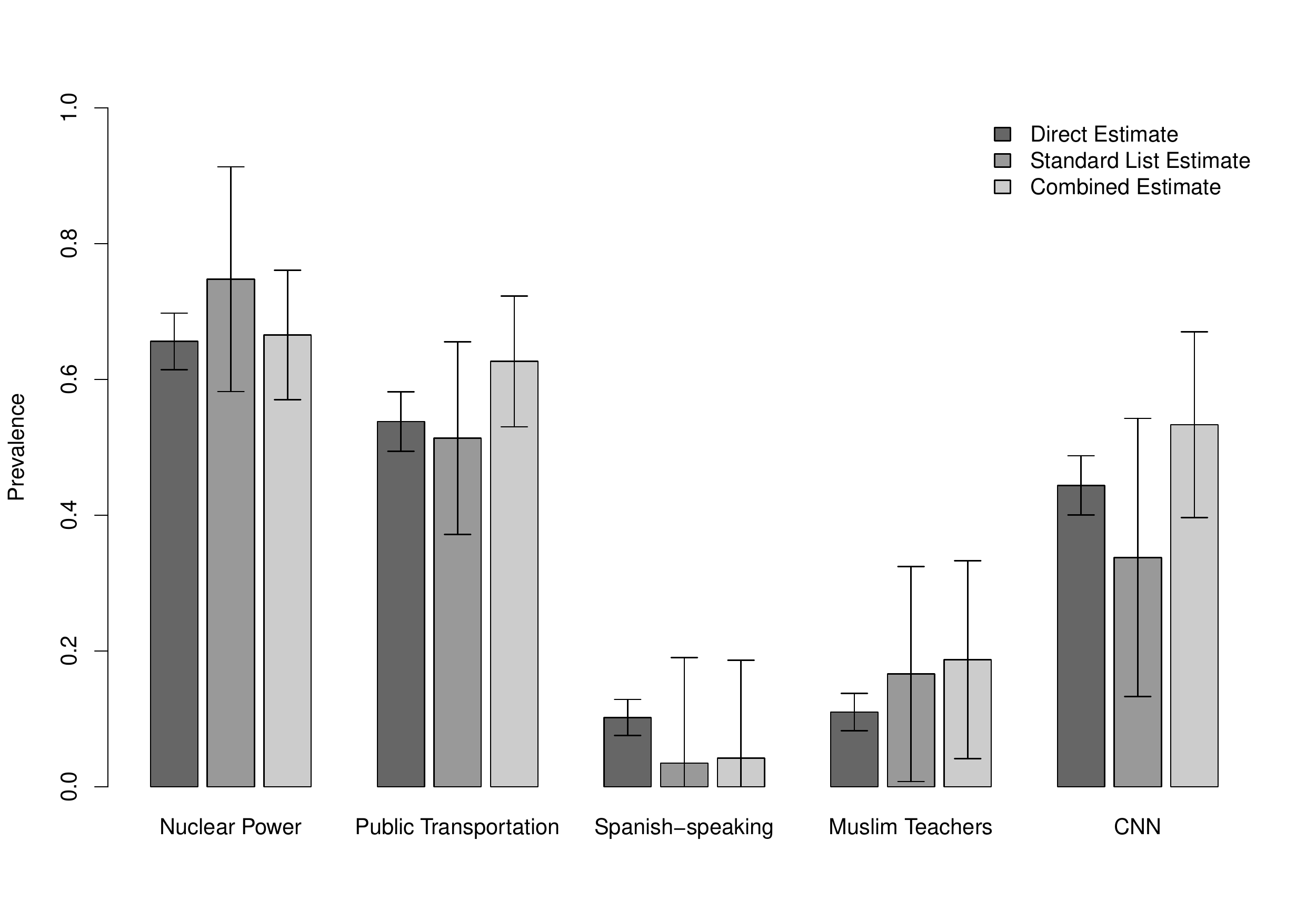}
\label{fig:studyAfig}
\end{figure}

Table~\ref{tab:studyAplacebo} presents the results of Placebo Test I. If these assumptions hold, the standard list experiment difference-in-means estimator will recover estimates that are in expectation equal to one among the subsample that answers ``Yes'' to the direct question. In two cases, we reject the joint null hypothesis of Monotonicity, No Liars and No Design Effects: Public Transportation ($p=0.02$) and CNN ($p=0.03$).  

\begin{table}[H]
\centering
\singlespacing
\caption{Study A (Direct First): Placebo Test I} 
\begin{tabular}{rrrrr}
  \hline
      \noalign{\vskip 2pt}  
 & \multicolumn{1}{c}{$\hat \beta$} & \multicolumn{1}{c}{SE} &\multicolumn{1}{c}{$\Pr[\beta \neq 1]$} & \multicolumn{1}{c}{$n$} \\ 
  \hline
Nuclear Power & 1.054 & 0.095 & 0.568 & 328\\ 
  Public Transportation & 0.790 & 0.091 & 0.021 & 269\\ 
  Spanish-speaking & 0.848 & 0.279 & 0.585 & 51\\ 
  Muslim Teachers & 1.008 & 0.237 & 0.973 & 55 \\ 
  CNN & 0.696 & 0.143 & 0.034 & 222\\ 
   \hline
\end{tabular}
\label{tab:studyAplacebo}
\end{table}

Since we have random assignment and the experimental treatment follows the administration of the direct question, we expect to pass Placebo Test II. Indeed, Placebo Test II results show no significant differences in mean responses to the direct questions by the list experimental treatment assignments.

\begin{table}[H]
\centering
\caption{Study A (Directs First): Placebo Test II}
\begin{tabular}{rrr}
  \hline
 & Estimate & SE \\ 
  \hline
Nuclear Power & 0.066 & 0.043 \\ 
  Public Transportation & -0.000 & 0.045 \\ 
  Spanish-speaking & 0.016 & 0.027 \\ 
  Muslim Teachers & -0.030 & 0.028 \\ 
  CNN & -0.056 & 0.045 \\ 
   \hline

   \hline
\end{tabular}
\label{placebo2directsfirstmain}
\end{table}

\subsection{Study B (List Experiments First)}

Study B reverses the order of the direct questions and list experiments: subjects participated in all five list experiments before answering the direct questions. This design choice risks priming
subjects in the
treatment group
in ways that might
alter their responses
to subsequent direct questions.
For example, treated subjects
may be prone to misreport if subjects suspect that a particular topic is being given special scrutiny.

\begin{table}[H]
\centering
\singlespacing
\caption{Study B (Lists First): Three Estimates of Prevalence}
\begin{tabular}{rp{1cm}p{1cm}p{1cm}p{1cm}p{1cm}p{1cm}c}
  \hline  
    \noalign{\vskip 1mm}  
   & \multicolumn{2}{c}{Direct} & \multicolumn{2}{c}{Standard List} & \multicolumn{2}{c}{Combined List} &  \multirow{2}{*}{\parbox{25mm}{\centering \% Reduction in Sampling Variance}}\\
      \noalign{\vskip 1mm}  
    \cmidrule(r){2-3} \cmidrule(r){4-5} \cmidrule(r){6-7}
  & \multicolumn{1}{c}{$\overline{Y}$} & \multicolumn{1}{c}{SE} & \multicolumn{1}{c}{$\hat\mu_S$} & \multicolumn{1}{c}{SE}& \multicolumn{1}{c}{$\hat\mu$} & \multicolumn{1}{c}{SE} &  \\
   \noalign{\vskip 1mm}  
  \hline
Nuclear Power & 0.603 & 0.022 & 0.499 & 0.089 & 0.624 & 0.052 & 65.3 \\ 
  Public Transportation & 0.539 & 0.022 & 0.578 & 0.073 & 0.608 & 0.051 & 51.8 \\ 
  Spanish-speaking & 0.113 & 0.014 & 0.144 & 0.083 & 0.149 & 0.076 & 16.3 \\ 
  Muslim Teachers & 0.103 & 0.013 & 0.107 & 0.083 & 0.123 & 0.074 & 20.1 \\ 
  CNN & 0.496 & 0.022 & 0.645 & 0.101 & 0.587 & 0.065 & 59.4 \\ 
   \hline
      \multicolumn{8}{l}{$n$ = 514 for all estimates}
\end{tabular}
\label{tab:studyBests}
\end{table}

\begin{figure}[H]
\centering
\caption{Study B (Lists First): Three Estimates of Prevalence}
\includegraphics[scale=.45]{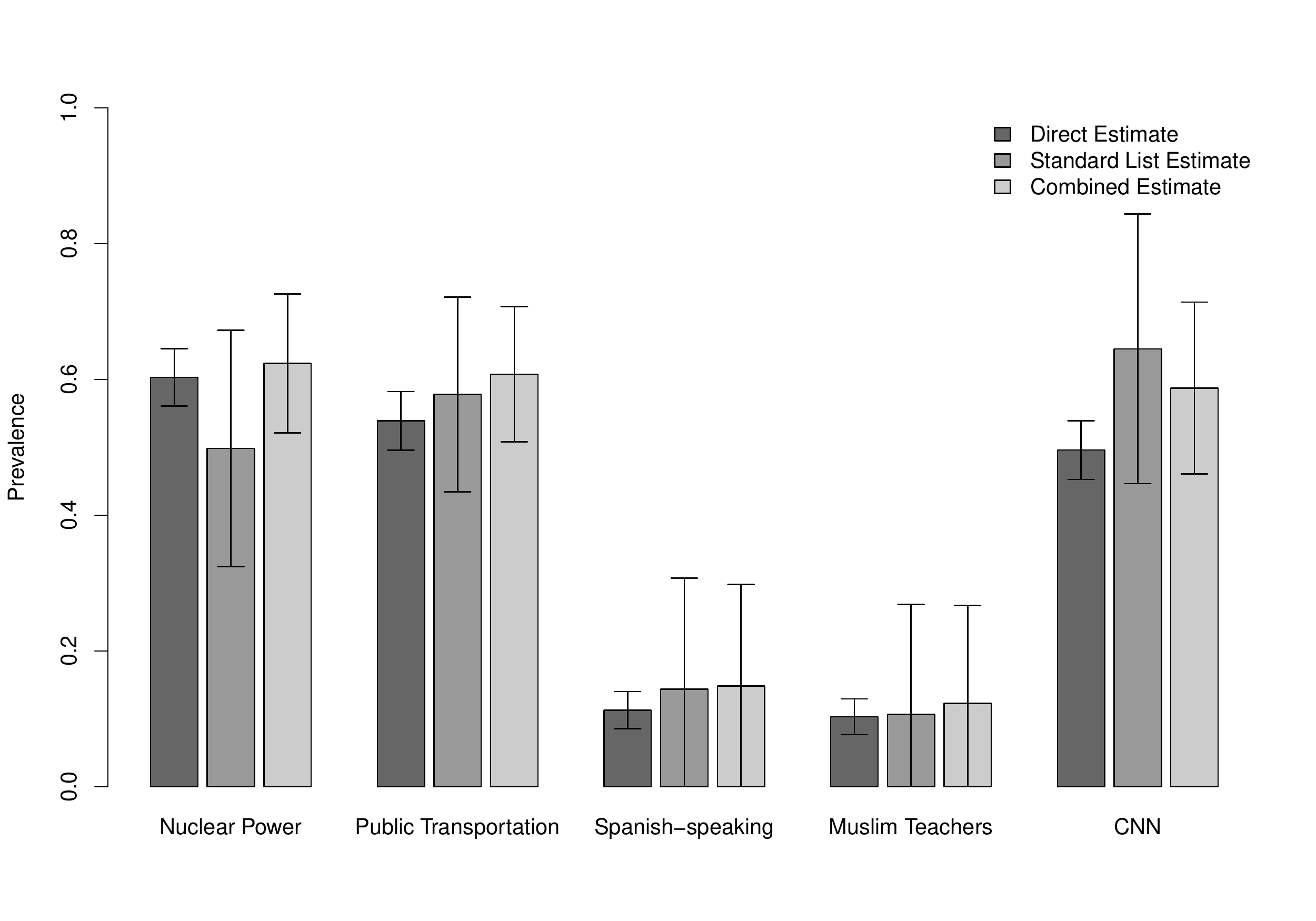}
\label{fig:studyBfig}
\end{figure}

All three prevalence estimates for Study B are presented in Table~\ref{tab:studyBests} and Figure~\ref{fig:studyBfig}. The direct question estimates are very similar between Study A and Study B -- none of the differences between the estimates is significant at the $0.05$ level. The standard list experiment estimates differ between Studies A and B, suggesting a question order effect. The combined estimator produces tighter estimates in Study B as well, with estimated sampling variability reductions in a very similar range. Appendix E presents formal tests of the differences in estimates across the studies.

The results of Placebo Test I for Study B are presented in Table~\ref{tab:studyBplacebo}. Among the subgroup of respondents who answer ``Yes'' to the direct question, the list experiment difference-in-means estimate $\beta$ should be equal to 1, under Assumptions 1-4 (Monotonicity, No Liars, No Design Effects, and Treatment Independence). None of the values of $\beta$ are statistically significantly different from 1 using the placebo test, and a joint test via Fisher's method is insignificant as well.

\begin{table}[H]
\centering
\singlespacing
\caption{Study B (Lists First): Placebo Test I}
\begin{tabular}{rrrrr}
  \hline
      \noalign{\vskip 2pt}  
 & \multicolumn{1}{c}{$\hat \beta$} & \multicolumn{1}{c}{SE} &\multicolumn{1}{c}{$\Pr[\beta \neq 1]$} & \multicolumn{1}{c}{$n$} \\ 
  \hline
Nuclear Power & 0.881 & 0.113 & 0.294 & 310\\ 
  Public Transportation & 0.913 & 0.091 & 0.339 & 277\\ 
  Spanish-speaking & 0.767 & 0.229 & 0.309 & 58\\ 
  Muslim Teachers & 0.700 & 0.285 & 0.293 & 53\\ 
  CNN & 0.847 & 0.135 & 0.256 & 255\\ 
   \hline
\end{tabular}
\label{tab:studyBplacebo}
\end{table}

As described in section \ref{sec:secondplacebo}, the combined estimator relies in part on the assumption that a subject's response to the direct question is unaffected by the list experimental treatment assignment. Violations of this assumption are directly testable using the second placebo test discussed in section~\ref{sec:secondplacebo}.  In study B, subjects were exposed to either a treatment or a control list before answering the direction question. Table~\ref{secondplacebomain} below presents the effect the treatment lists may have had on answers to the direct questions. In two of the five cases, direct questions were significantly affected by the treatment list:  Treated subjects were 8.6 percentage points less likely to declare their support for Nuclear power and were 13.2 percentage points more likely to report watching CNN.  These findings indicate that the Treatment Independence assumption is most likely violated for these questions, rendering the Study B combined list estimates for these two questions unreliable.

\begin{table}[H]
\centering
\caption{Study B (Lists First): Placebo Test II}
\begin{tabular}{rrr}
  \hline
 & Estimate & SE \\ 
  \hline
Nuclear Power & -0.086 & 0.043 \\ 
  Public Transportation & 0.034 & 0.044 \\ 
  Spanish-speaking & 0.027 & 0.028 \\ 
  Muslim Teachers & 0.016 & 0.027 \\ 
  CNN & 0.132 & 0.044 \\ 
   \hline

   \hline
\end{tabular}
\label{secondplacebomain}
\end{table}

\subsection{Replication Study}

We conducted a replication study following the identical design with 506 new Mechanical Turk subjects in a replication of Study A and 506 in a replication of Study B. Full results of this replication study are presented in Appendix F, but the findings are strikingly similar to the first investigation. The list experimental estimates diverge, though these differences are likely due to sampling variability (none of the differences in list experiment estimates are significantly different from one another).  One of five placebo tests was significant in Study A and three of five were significant in Study B.  Perhaps surprisingly, the effect of the treatment list on direct answers to the CNN question presented in Table~\ref{secondplacebomain} was also replicated.

\section{Discussion}

Social desirability effects may bias prevalence estimates of sensitive behaviors and opinions obtained using direct questioning, but that does not mean that direct questions are useless. Under an assumption of Monotonicity (subjects who do not engage in the sensitive behavior do not falsely confess), direct questions reveal reliable information about those who answer ``Yes.'' Among those who answer ``No,'' we cannot directly distinguish those who withhold from those who do not engage in the sensitive behavior -- for these subjects, list experiments are especially useful. Combining these two techniques into a single estimator yields more precise estimates of prevalence, and employing
direct and list
questions in tandem
also enables
the researcher
to test crucial
identifying 
assumptions.

A few caveats are in order with respect to empirical applications. First, Monotonicity is not guaranteed to hold, especially when social desirability cuts in opposite directions for different subgroups. For example, moderates in liberal areas may feel pressure to support Muslim teachers, whereas moderates in conservative areas may feel pressure to oppose them. Second, list experiments are often employed when the security of respondents would be compromised if they admitted to sensitive opinions or behaviors (e.g., Pashtun respondents admitting support for NATO forces, Blair, Imai, and Lyall 2013). We do not take these concerns lightly, and in such cases would not recommend the use of our method. Third, the order in which direct questions and list experiments are asked appears to matter: we reject the joint null hypothesis of Monotonicity, No Liars and No Design Effects for two of the experiments in Study A, but fail to reject it for any of the five experiments in Study B. Given that the two studies are roughly equally powered, this finding would seem to argue for asking the direct question second. However, Placebo Test II suggests that, at least in our application, asking the direct question second induced a violation of the Treatment Independence assumption. In sum, we recommend randomizing the order in which the list experiment and the direct question are presented, so that a) question-order effects can be contained and b) the relevant tests of the assumptions can be performed.

We have combined direct question estimates with the simplest of the various list experiment estimators: difference-in-means. This work could be extended straightforwardly to the multivariate settings discussed by Corstange (2009), Holbrook and Krosnick (2010), and Imai (2011). One such approach would involve regression estimation (S\"{a}rndal, Swensson, and Wretman 1992, Lin 2013) or, equivalently, post-stratification (Holt and Smith 1979, Miratrix, Sekhon and Yu 2013) for computing covariate adjusted means. Such an approach would  improve asymptotic efficiency without any parametric assumptions, and a consistent variance estimator may be derived by substituting residuals from the regression fit. Finally, we note that other methods for eliciting truthful responses to sensitive questions, such as randomized response (Warner 1965) and endorsement experiments (Bullock, Imai, and Shapiro 2011; Lyall, Blair, and Imai forthcoming), could also be profitably combined with direct questioning. 

\singlespacing
\nocite{*}
\bibliography{listbib}

\newpage
\setcounter{table}{0}
\setcounter{figure}{0}
\renewcommand{\thetable}{A\arabic{table}}
\renewcommand{\thefigure}{A\arabic{figure}}

\section*{Appendix A: Survey Text}
\singlespacing
\begin{longtable}{p{2cm} p{7cm} p{6cm}}
     & List Question (sensitive item in bold)  & Direct Question  \\ \hline
\endfirsthead

\multicolumn{3}{c}%
{continued from previous page} \\
     & List Question (sensitive item in bold) & Direct Question  \\ \hline
\endhead

\hline 
\endlastfoot 
     
Nuclear Power \newline (L1 \& D1) & Here is a list of four (five) alternative energy sources that some people think the United States should be developing.
Please read all four (five) and select how many of them you think the United States should be developing.
We do not want to know which ones, just how many. & Some people think nuclear power is an alternative energy source that the United States should be developing. 

How about you?  Do you think nuclear power is an alternative energy
source that the United States should be developing? \\*
          & Wind  &  Yes \\*
          & Solar  &  No \\*
          & \textbf{Nuclear Power } &  \\*
          & Natural Gas  &  \\*
          & Ethanol &  \\* \hline
Public Transportation \newline (L2 \& D2) & Here is a list of three (four) things some people consider important when choosing a neighborhood to live in.
Please read all three (four) and select how many of them you consider important when choosing a neighborhood.
We do not want to know which ones, just how many. & Some people consider access to public transportation important when choosing a
neighborhood to live in.  

How about you? Do you consider access to public transportation
important when choosing a neighborhood to live in? \\*
          & Proximity to shops and restaurants  &  Yes \\*
          & Quality of local schools  & No \\*
          & Neighbors who share my values &  \\*
          & \textbf{Access to public transportation} &  \\* \hline

Spanish-speaking   \newline (L3 \& D3)          & Here is a list of four (five) things that sometimes make people angry or upset.
Please read all four (five) and select how many of them upset you.
We do not want to know which ones, just how many. & Sometimes people are angry or upset when a Spanish-speaking family moves in next door. 
How about you? Would a Spanish-speaking family moving in next door upset you? \\*
          & The federal government increasing the tax on gasoline  & Yes \\*
          & Professional athletes earning large salaries  & No \\*
          & \textbf{A Spanish-speaking family moving in next door } &  \\*
          & Requiring seat belts be used when driving  &  \\*
          & Large corporations polluting the environment &  \\* \hline

Muslim Teachers  \newline (L4 \& D4)         & Here is a list of three (four) groups that some people think should be prohibited from teaching in public schools.
We do not want to know which ones, just how many.
 & Sometimes people think that Muslims should ​be prohibited from teaching in public schools. 
How about you?  Do you think that Muslims should ​be prohibited from teaching in public schools?
 \\*
          & 18-21 year olds & Yes \\*
          & \textbf{Muslims} & No \\*
          & People without a Masters degree in education &  \\*
          & People who earn a 2.0 GPA or lower  &  \\* \hline

CNN  \newline (L5 \& D5)         & Here is a list of four (five) news organizations.
Please read all four (five) and select how many you read or watch in the course of an ordinary month.
We do not want to know which ones, just how many. & In the course of an ordinary month, do you watch CNN? \\*
          & The New York Times  & Yes \\*
          & \textbf{CNN } & No \\*
          & The Huffington Post  &  \\*
          & Fox News  &  \\*
          & Politico &  \\     
\label{tab:svytext}       
\end{longtable}

\newpage
\section*{Appendix B: Proofs}

\subsection*{Proof of Proposition 1}
\begin{proof}
The proof proceeds by working with linearized variances. 
\begin{equation}\label{linear}
  \begin{split}
  \Var \left[ \hat\mu  \right] &= \Var\left[1 + (1-\overline{Y})\left(\overline{V}_{1,0} - \overline{V}_{0,0} - 1 \right) \right]  \\
                               &= \Var\left[(1-\overline{Y})\left(\overline{V}_{1,0} - \overline{V}_{0,0} - 1 \right) \right]  \\
                               &= (\E[1-\overline{Y}])^2 \Var\left[\overline{V}_{1,0} - \overline{V}_{0,0}\right]  +\Var[1-\overline{Y}] (\E\left[\overline{V}_{1,0} - \overline{V}_{0,0} - 1\right])^2  \\
                               &\qquad +\Var[1-\overline{Y}] \Var\left[\overline{V}_{1,0} - \overline{V}_{0,0}\right]  \\
                               &= (1-\mu p)^2 \Var\left[\overline{V}_{1,0} - \overline{V}_{0,0}\right] + \frac{\mu p(1-\mu p)}{n} \left(\frac{1-\mu}{1-\mu p}\right)^2 \\
                               &\qquad + \frac{\mu p(1-\mu p)}{n} \Var\left[\overline{V}_{1,0} - \overline{V}_{0,0}\right] \\
                               &= \frac{\mu p(1-\mu)^2}{n(1-\mu p)} + (1-\mu p)^2 \left[ \frac{\Var[V_i|Z_i=1,Y_i=0]}{(1-\mu p) m} + \frac{\Var[V_i|Z_i=0,Y_i=0]}{(1-\mu p) (n-m)} \right]  +O(n^{-2}) \\
                               &= \frac{\mu p(1-\mu)^2}{n(1-\mu p)} + (1-\mu p)\left[ \frac{\Var[V_i|Z_i=1,Y_i=0]}{m} + \frac{\Var[V_i|Z_i=0,Y_i=0]}{n-m} \right]  + O(n^{-2}), 
\end{split}
\end{equation}
where $m =  \sum_{i=1}^n Z_i$. Multiplying by $n$ and letting $\gamma=m/n$ yields the desired result.
\end{proof}

\subsection*{Proof of Proposition 2}
\begin{proof}
  Consistent estimators exist for each element in $\widehat\Var\left[\hat\mu\right]$ by substituting in sample means and sample variances. Since the random vector $\left( 1-\bar{Y}, \overline{V}_{1,0}, \overline{V}_{0,0} - 1\right)$ is jointly asymptotically normal, so too is $\hat\mu$. Since ${ ( \hat \mu - \mu )}/\sqrt { \Var\left[\hat\mu\right]} \rightarrow \N(0,1)$ and $\sqrt {\Var\left[\hat\mu\right]}/\sqrt { \widehat \Var\left[\hat\mu\right]} \rightarrow 1$, then Slutsky's Theorem implies $$\frac{ ( \hat \mu - \mu )}{\sqrt { \Var\left[\hat\mu\right]}} \frac{\sqrt {\Var\left[\hat\mu\right]}}{ \sqrt {\widehat \Var\left[\hat\mu\right]}} \rightarrow \N(0,1).$$
\end{proof}

\subsection*{Proof of Proposition 3}
\begin{proof}
We begin by expressing the asymptotic variance of $\hat \mu_S$:
\begin{align*}
\underset{n\rightarrow \infty}{\plim} n\Var \left[ \hat \mu_S  \right] =& (1-\mu p) \left[ \frac{\Var[V_i | Z_i = 1,Y_i = 0]}{\gamma}  + \frac{\Var[V_i | Z_i = 0, Y_i = 0]}{1-\gamma} \right] \\
&+ \mu p \left[ \frac{\Var[V_i | Z_i = 1,Y_i = 1]}{\gamma}  + \frac{\Var[V_i | Z_i = 0, Y_i = 1]}{1-\gamma} \right] \\
& + (1-\mu p)  \left[   \frac{ (\E[V_i | Z_i = 1,Y_i = 0] - \E[V_i | Z_i = 1])^2 }{\gamma}\right] \\  
& + (1-\mu p) \left[ \frac{(\E[V_i | Z_i = 0,Y_i = 0] - \E[V_i | Z_i = 0])^2}{1-\gamma} \right]\\
& + \mu p  \left[   \frac{ (\E[V_i | Z_i = 1,Y_i = 1] - \E[V_i | Z_i = 1])^2 }{\gamma}\right] \\  
& + \mu p \left[ \frac{(\E[V_i | Z_i = 0,Y_i = 1] - \E[V_i | Z_i = 0])^2}{1-\gamma} \right]\\
%
= & (1-\mu p) \left[ \frac{\Var[V_i | Z_i = 1,Y_i = 0]}{\gamma}  + \frac{\Var[V_i | Z_i = 0, Y_i = 0]}{1-\gamma} \right] \\
&+ \mu p \left[ \frac{\Var[V_i | Z_i = 1,Y_i = 1]}{\gamma}  + \frac{\Var[V_i | Z_i = 0, Y_i = 1]}{1-\gamma} \right] \\
& + \mu p(1-\mu p)\left[\frac{ ( \E[V_i | Z_i = 1,Y_i = 0]  -  \E[V_i | Z_i = 1,Y_i = 1]  )^2 }{\gamma}\right] \\  
& + \mu p(1-\mu p)  \left[ \frac{(\E[V_i | Z_i = 0,Y_i = 0]  - \E[V_i | Z_i = 0,Y_i = 1] )^2}{1-\gamma} \right].
\end{align*}
By Assumptions 1 and 2 (Monotonicity, No Liars and No Design Effects), 
$\E[V_i | Z_i = 1,Y_i = 1] = \E[V_i | Z_i = 0,Y_i = 1] + 1$. Then 
\begin{align*}
  \underset{n\to\infty}{\plim} n\Var \left[ \hat \mu_S  \right] =
&  (1-\mu p) \left[ \frac{\Var[V_i | Z_i = 1,Y_i = 0]}{\gamma}  + \frac{\Var[V_i | Z_i = 0, Y_i = 0]}{1-\gamma} \right] \\
&+ \mu p \left[ \frac{\Var[V_i | Z_i = 1,Y_i = 1]}{\gamma}  + \frac{\Var[V_i | Z_i = 0, Y_i = 1]}{1-\gamma} \right] \\
&+ \mu p(1-\mu p) \left[ \frac{ ( \frac{\mu -1}{1-\mu p}+  \E[V_i | Z_i = 0,Y_i = 0]  -  \E[V_i | Z_i = 0,Y_i = 1]  )^2 }{\gamma}\right] \\  
& + \mu p (1 - \mu p)  \left[ \frac{(\E[V_i | Z_i = 0,Y_i = 0]  - \E[V_i | Z_i = 0,Y_i = 1] )^2}{1-\gamma} \right].
\end{align*}
Applying the first order condition, $\plim_{n\rightarrow \infty} n\Var \left[ \hat\mu_S  \right]$ is minimized when $\E[V_i | Z_i = 0,Y_i = 0]  - \E[V_i | Z_i = 0,Y_i = 1]  = [\gamma-1][(\mu-1)/(1-\mu p)]$.
Substituting terms, it follows that 
\begin{align}\label{final}
\underset{n\rightarrow \infty}{\plim}
n \Var \left[  \hat \mu_S   \right] \geq & (1-\mu p) \left[ \frac{\Var[V_i | Z_i = 1,Y_i = 0]}{\gamma}  + \frac{\Var[V_i | Z_i = 0, Y_i = 0]}{1-\gamma} \right] \nonumber \\
&+ \mu p \left[ \frac{\Var[V_i | Z_i = 1,Y_i = 1]}{\gamma}  + \frac{\Var[V_i | Z_i = 0, Y_i = 1]}{1-\gamma} \right] \nonumber \\
& + \mu p  (1 - \mu p)  \left(\frac{1-\mu}{1-\mu p}\right)^2 \nonumber \\
\geq &  \underset{n\to\infty}{\plim} n\Var \left[  \hat \mu  \right].
\end{align}
Assumption 3 (Non-degenerate Distributions) ensures that the inequality holds strictly.
\end{proof}

\section*{Appendix C: Placebo Test I Power Analysis}

The placebo test proposed in Section 4 is a joint test of Assumptions 1-3 (Monotonicity, No Liars, No Design Effects, and Treatment Independence).  In brief, the test considers whether the conventional list experimental estimate appears to be significantly different from 1.0 among the subset of subjects who answer ``Yes'' to the direct question.  If this estimate is different from 1.0, it must either be because a) some of those who answer ``Yes'' are falsely confessing (thereby violating monotonicity) or b) the standard list experiment assumptions of No Liars and No Design Effects are not met.  We vary the following five quantities: the number of subjects who answer ``Yes'' to the direct question ($N_{\text{Yes}}$, or $\sum_{i=1}^N Y_i$) the variability of responses to the non-sensitive list items, the proportion of $N_{\text{Yes}}$ who falsely confess, the proportion of $N_{\text{Yes}}$ who lie when given the treatment list, and the proportion of $N_{\text{Yes}}$ whose responses to the non-sensitive list items change when given the treatment list. 

We display the results of four power simulations in Figure~\ref{fig: powertiles} below.  On the y-axis of each panel, we vary $N_{\text{Yes}}$.  On the x-axis of the first three panels, we vary the proportion of those subjects whose response profile violates one of the assumptions: No False Confessions, No Liars, or No Design Effects, respectively. The proportion of units that violate the other two assumptions was fixed at 0.  Control list responses were drawn from a binomial distribution with a success probability of 0.4 and four trials.  For those units who do meet the assumptions of No False Confessions, No Liars, or No Design Effects, treatment list responses were set equal to the control list response plus one. The treatment list responses for Liars and False Confessors were set equal to their control list responses. The treatment list response for Design Affected subjects was generated as a ``ceiling effect'': the control list plus one, except for those with a ``4'' on the control list; those units' treatment list response remained equal to 4.\footnote{This is one of many possible design effects; the power of our test to detect any particular design effect depends on the manner in which it changes subjects' treatment list responses.} The final panel fixes the proportion of false confessors at 0.20 and changes the variability of responses to the non-sensitive list items.  We parameterized the variability in responses to non-sensitive list items as the success probability of a binomial distribution with four trials.  This variability is maximized when $p = 0.5$.  

We varied $N_{\text{Yes}}$ from 100 to 1000 in steps of 50, the proportion of False Confessors, Liars, and Design Affected units from 0 to 1 in steps of 0.01, and the success probability of the binomial distribution from 0 to 1 in steps of 0.01.  We conducted 1,000 simulations of each combination.  The shading reflects the proportion of simulations in which we were able to reject the null hypothesis that Assumptions 1 and 2 hold, with darker shades corresponding to higher power.  For ease of interpretation, the shading around the conventional power target of 0.80 is shaded red.

The simulations show that for any level of violation, a larger sample size increases the power of the test. At any sample size, greater proportions of violators increase the power of the test. When approximately 20\% of a sample of 800 confessors falsely confess, lie on the treatment list, or change their responses to the non-sensitive items on the treatment list, the placebo test achieves 80\% power. The final panel shows that the power of the test is maximized when the variability in non-sensitive list items is minimized.
 
\begin{figure}[H]\centering
\caption{Power of Placebo Test I to Detect Violations of Assumptions 1-3}
\includegraphics[scale=.45]{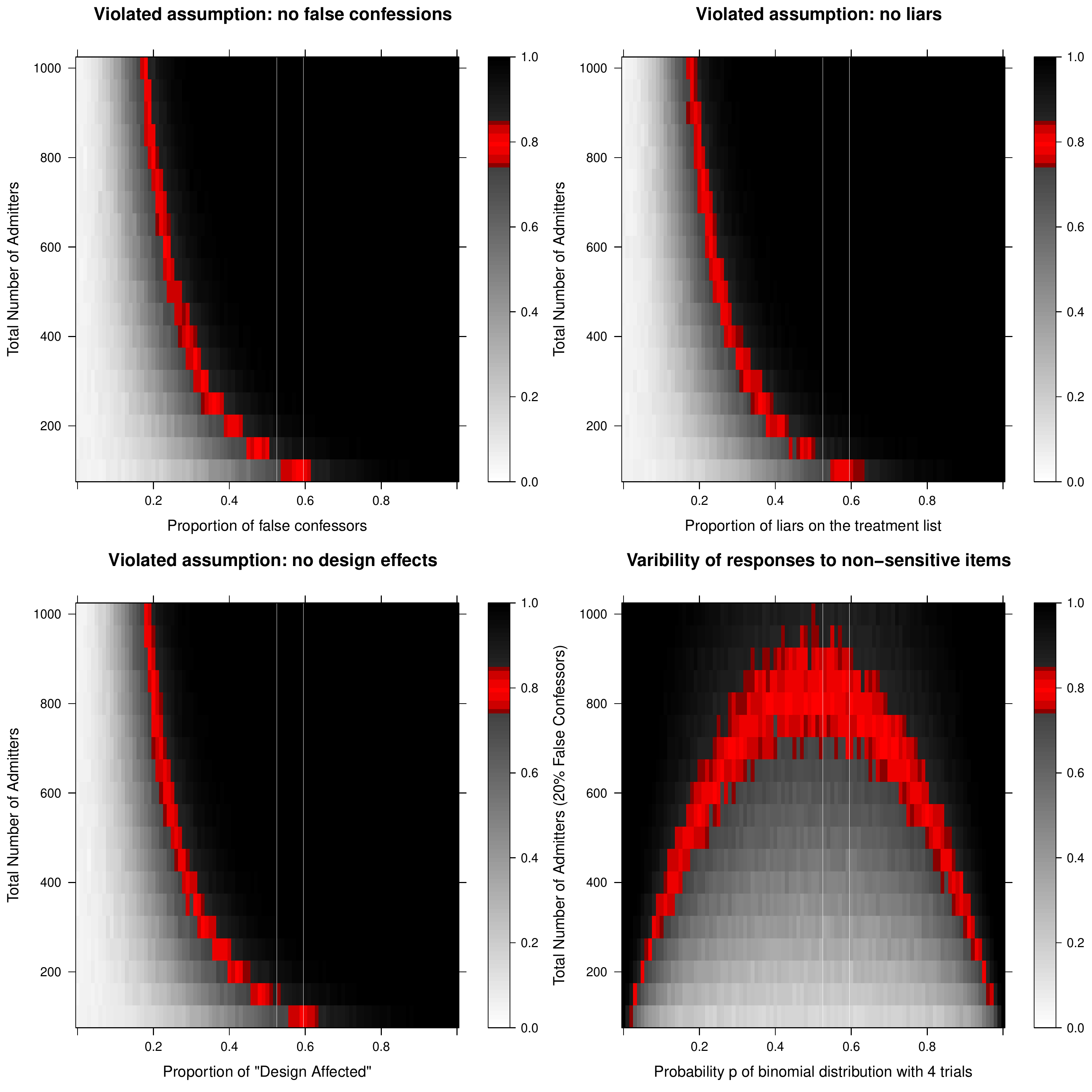}
\label{fig: powertiles}
\end{figure}

\section*{Appendix D: Placebo Test II Power Analysis}

The power of Placebo Test II to detect violations of Treatment Independence is equivalent to the power of test of a difference in proportions. Following the implementation in R (see accompanying text: Dalgaard 2008, p. 159), the power of the test for N = 100, 200, 500, and 1000 is shown in Figure~\ref{fig: powerIItiles} below. In all panels, the proportion of ``Yes'' responses in the treatment group is given on the x-axis, and the proportion in the control group is given on the y-axis. Darker coloring indicates higher power, with the band around the 0.80 power target shaded red.  

\begin{figure}[H]\centering
\caption{Power of Placebo Test II to Detect Violations of Treatment Independence}
\includegraphics[scale=.45]{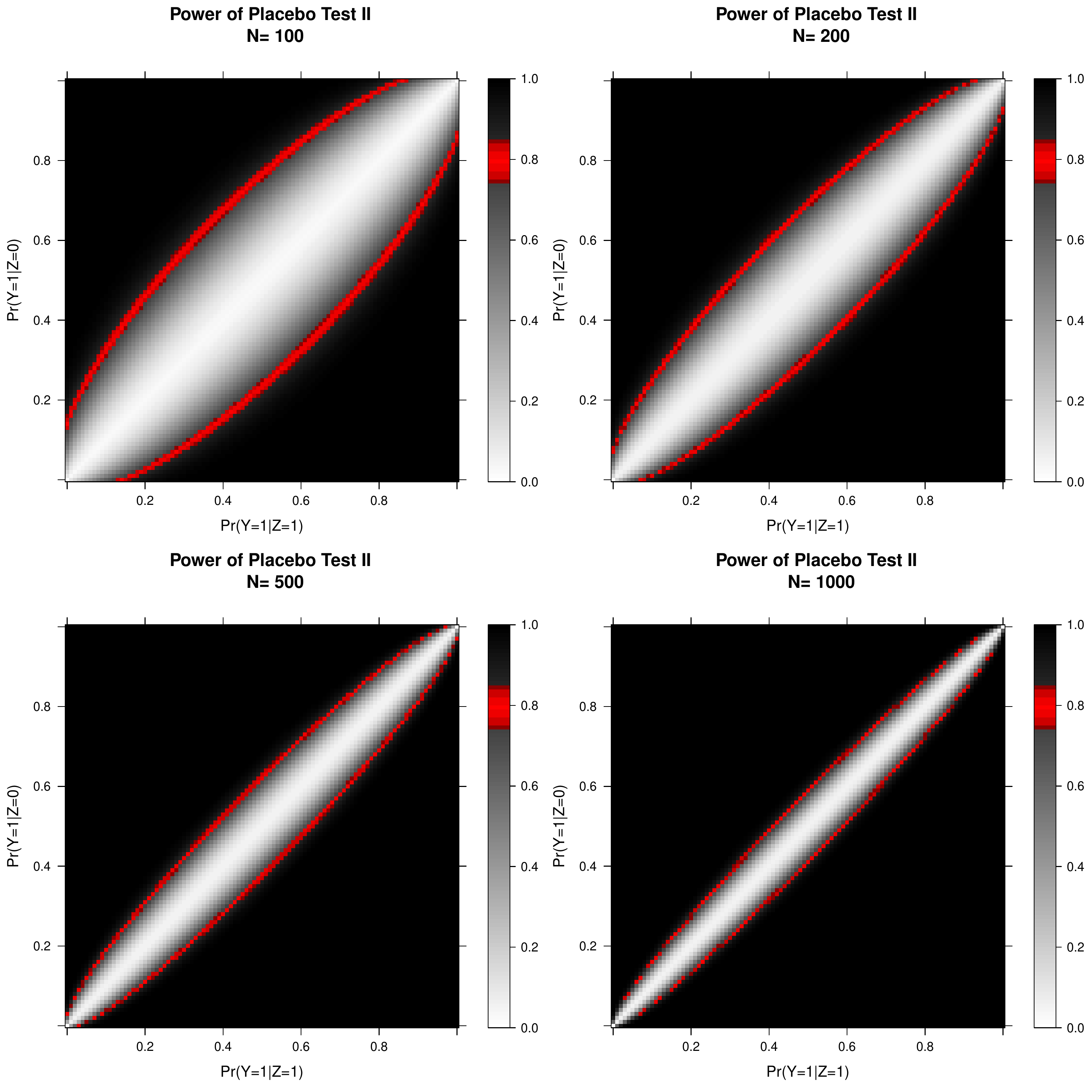}
\label{fig: powerIItiles}
\end{figure}

\section*{Appendix E: Factorial Design}

Subjects were randomly assigned to Study A (Direct Questions first) or Study B (List Experiments First).  Table~\ref{tab:factorialmain} below presents the differences in estimates obtained by the two studies.  The order in which the direct and list questions does not appear to have had a  significant impact on the Direct or Combined List estimates of prevalance, but does appear to have changed the Conventional List estimates in two cases.  The list experimental estimate of the proportion supporting Nuclear power is 24.9 percentage points higher when the direct questions are asked first, whereas the estimate of those watching CNN is 30.7 percentage points lower when the direct questions are asked first.  

\begin{table}[H]
\centering 
\caption{Factorial Design: Study A - Study B} 
\begin{tabular}{rrrrrrr}
&\multicolumn{2}{c}{Direct} & \multicolumn{2}{c}{Conventional List} & \multicolumn{2}{c}{Combined List}\\
\cmidrule(r){2-3} \cmidrule(r){4-5} \cmidrule(r){6-7}
 & Difference & SE & Difference & SE & Difference & SE \\ 
  \hline
Nuclear Power & 0.053 & 0.030 & 0.249 & 0.122 & 0.042 & 0.071 \\ 
  Public Transportation & -0.001 & 0.031 & -0.064 & 0.103 & 0.019 & 0.071 \\ 
  Spanish-speaking & -0.011 & 0.019 & -0.109 & 0.115 & -0.106 & 0.106 \\ 
  Muslim Teachers & 0.007 & 0.019 & 0.060 & 0.116 & 0.064 & 0.105 \\ 
  CNN & -0.052 & 0.031 & -0.307 & 0.146 & -0.054 & 0.095 \\ 
   \hline

   \hline
\end{tabular}
\label{tab:factorialmain}
\end{table}

\section*{Appendix F: Replication Study}

The replication study was conducted with a new pool of Mechanical Turk respondents ten months after the first study.  The experimental procedure was identical in every respect to the design described in Section~\ref{sec:application}. 1019 users started the survey, but 7 either did not complete the survey or failed the attention question, leaving 1012 complete cases. 506 subjects participated in Study A and 506 participated in Study B.  

The formats of the tables below follow those Section~\ref{sec:application}, facilitating comparisons. Of particular note are the results of the placebo tests, presented in Tables \ref{tab:studyAplaceboreplication}, \ref{tab:studyBplaceboreplication}, and \ref{tab:primingplaceboreplication}.  In the original study, two of five questions failed the placebo test in Study A and none of the question failed in Study B.  The replication shows a different pattern:  one of five fail in Study A, whereas four of five (at the 10\% level or greater) fail in Study B.  In the original study, both Nuclear Power and CNN failed the second placebo test, but in the replication, only CNN fails.  We interpret this result to mean that the effect of the treatment list on direct answers to the CNN is not a fluke due to sampling variability but rather a robust causal effect on the order of 10 percentage points.

\begin{table}[H]
\centering
\singlespacing
\caption{Study A (Direct First): Three Estimates of Prevalence}
\begin{tabular}{rp{1cm}p{1cm}p{1cm}p{1cm}p{1cm}p{1cm}c}
  \hline  
    \noalign{\vskip 1mm}  
   & \multicolumn{2}{c}{Direct} & \multicolumn{2}{c}{Standard List} & \multicolumn{2}{c}{Combined List} &  \multirow{2}{*}{\parbox{25mm}{\centering \% Reduction in Sampling Variance}}\\
         \noalign{\vskip 1mm} 
    \cmidrule(r){2-3} \cmidrule(r){4-5} \cmidrule(r){6-7}
  & \multicolumn{1}{c}{$\overline{Y}$} & \multicolumn{1}{c}{SE} & \multicolumn{1}{c}{$\hat\mu_S$} & \multicolumn{1}{c}{SE}& \multicolumn{1}{c}{$\hat\mu$} & \multicolumn{1}{c}{SE} &  \\
   \noalign{\vskip 1mm}  
  \hline
Nuclear Power & 0.646 & 0.021 & 0.575 & 0.086 & 0.647 & 0.049 & 67.326 \\ 
  Public Transportation & 0.555 & 0.022 & 0.635 & 0.073 & 0.653 & 0.048 & 56.643 \\ 
  Spanish-speaking & 0.061 & 0.011 & 0.118 & 0.078 & 0.104 & 0.073 & 11.798 \\ 
  Muslim Teachers & 0.083 & 0.012 & 0.034 & 0.078 & 0.036 & 0.072 & 14.651 \\ 
  CNN & 0.407 & 0.022 & 0.256 & 0.100 & 0.298 & 0.074 & 45.227 \\ 
   \hline
      \multicolumn{8}{l}{$n$ = 506 for all estimates}
\end{tabular}
\label{tab:studyAestsreplication} 
\end{table}

\begin{table}[H]
\centering
\singlespacing
\caption{Study A (Direct First): Placebo Test I} 
\begin{tabular}{rrrrr}
  \hline
      \noalign{\vskip 2pt}  
 & \multicolumn{1}{c}{$\hat \beta$} & \multicolumn{1}{c}{SE} &\multicolumn{1}{c}{$\Pr[\beta \neq 1]$} & \multicolumn{1}{c}{$n$} \\ 
  \hline
Nuclear Power & 0.938 & 0.104 & 0.549 & 327.000 \\ 
  Public Transportation & 0.949 & 0.089 & 0.566 & 281.000 \\ 
  Spanish-speaking & 0.816 & 0.311 & 0.554 & 31.000 \\ 
  Muslim Teachers & 1.048 & 0.297 & 0.872 & 42.000 \\ 
  CNN & 0.711 & 0.133 & 0.029 & 206.000 \\ 
   \hline
\end{tabular}
\label{tab:studyAplaceboreplication}
\end{table}

\begin{table}[H]
\centering
\caption{Study A (Directs First): Placebo Test II}
\begin{tabular}{rrr}
  \hline
 & Estimate & SE \\ 
  \hline
Nuclear Power & -0.059 & 0.043 \\ 
  Public Transportation & 0.017 & 0.044 \\ 
  Spanish-speaking & 0.031 & 0.021 \\ 
  Muslim Teachers & -0.009 & 0.025 \\ 
  CNN & 0.075 & 0.044 \\ 
   \hline

   \hline
\end{tabular}
\label{placebo2directsfirstreplication}
\end{table}

\begin{table}[H]
\centering
\singlespacing
\caption{Study B (Lists First): Three Estimates of Prevalence}
\begin{tabular}{rp{1cm}p{1cm}p{1cm}p{1cm}p{1cm}p{1cm}c}
  \hline  
    \noalign{\vskip 1mm}  
   & \multicolumn{2}{c}{Direct} & \multicolumn{2}{c}{Standard List} & \multicolumn{2}{c}{Combined List} &  \multirow{2}{*}{\parbox{25mm}{\centering \% Reduction in Sampling Variance}}\\
      \noalign{\vskip 1mm}  
    \cmidrule(r){2-3} \cmidrule(r){4-5} \cmidrule(r){6-7}
  & \multicolumn{1}{c}{$\overline{Y}$} & \multicolumn{1}{c}{SE} & \multicolumn{1}{c}{$\hat\mu_S$} & \multicolumn{1}{c}{SE}& \multicolumn{1}{c}{$\hat\mu$} & \multicolumn{1}{c}{SE} &  \\
   \noalign{\vskip 1mm}  
  \hline
Nuclear Power & 0.591 & 0.022 & 0.364 & 0.087 & 0.521 & 0.054 & 61.047 \\ 
  Public Transportation & 0.520 & 0.022 & 0.602 & 0.073 & 0.672 & 0.049 & 54.183 \\ 
  Spanish-speaking & 0.099 & 0.013 & 0.025 & 0.079 & 0.094 & 0.074 & 11.822 \\ 
  Muslim Teachers & 0.103 & 0.014 & 0.195 & 0.082 & 0.164 & 0.074 & 18.610 \\ 
  CNN & 0.457 & 0.022 & 0.586 & 0.107 & 0.604 & 0.075 & 50.158 \\ 
   \hline
      \multicolumn{8}{l}{$n$ = 506 for all estimates}
\end{tabular}
\label{tab:studyBestsreplication}
\end{table}

\begin{table}[H]
\centering
\singlespacing
\caption{Study B (Lists First): Placebo Test I}
\begin{tabular}{rrrrr}
  \hline
      \noalign{\vskip 2pt}  
 & \multicolumn{1}{c}{$\hat \beta$} & \multicolumn{1}{c}{SE} &\multicolumn{1}{c}{$\Pr[\beta \neq 1]$} & \multicolumn{1}{c}{$n$} \\ 
  \hline
Nuclear Power & 0.739 & 0.110 & 0.018 & 299.000 \\ 
  Public Transportation & 0.795 & 0.097 & 0.034 & 263.000 \\ 
  Spanish-speaking & 0.301 & 0.262 & 0.008 & 50.000 \\ 
  Muslim Teachers & 1.005 & 0.297 & 0.987 & 52.000 \\ 
  CNN & 0.759 & 0.141 & 0.087 & 231.000 \\ 
   \hline
\end{tabular}
\label{tab:studyBplaceboreplication}
\end{table}

\begin{table}[H]
\centering
\caption{Study B (Lists First): Placebo Test II}
\begin{tabular}{rrr}
  \hline
 & Estimate &  SE \\ 
  \hline
Nuclear Power & -0.003 & 0.044 \\ 
  Public Transportation & 0.075 & 0.045 \\ 
  Spanish-speaking & -0.001 & 0.027 \\ 
  Muslim Teachers & 0.039 & 0.027 \\ 
  CNN & 0.092 & 0.044 \\ 
   \hline

   \hline
\end{tabular}
\label{tab:primingplaceboreplication}
\end{table}

\begin{table}[H]
\centering 
\caption{Factorial Design: Study A - Study B} 
\begin{tabular}{rrrrrrr}
&\multicolumn{2}{c}{Direct} & \multicolumn{2}{c}{Conventional List} & \multicolumn{2}{c}{Combined List}\\
\cmidrule(r){2-3} \cmidrule(r){4-5} \cmidrule(r){6-7}
 & Difference & SE & Difference & SE & Difference & SE \\ 
  \hline
Nuclear Power & 0.055 & 0.031 & 0.210 & 0.123 & 0.127 & 0.073 \\ 
  Public Transportation & 0.036 & 0.031 & 0.032 & 0.103 & -0.019 & 0.069 \\ 
  Spanish-speaking & -0.038 & 0.017 & 0.093 & 0.111 & 0.010 & 0.104 \\ 
  Muslim Teachers & -0.020 & 0.018 & -0.161 & 0.113 & -0.128 & 0.104 \\ 
  CNN & -0.049 & 0.031 & -0.330 & 0.146 & -0.306 & 0.105 \\ 
   \hline

   \hline
\end{tabular}
\label{tab:factorialreplication}
\end{table}

\end{document}